\documentclass[aps,prl,twocolumn,showpacs,groupedaddress,notitlepage,nobibnotes,superscriptaddress,floatfix,longbibliography,preprintnumbers]{revtex4-2}
\usepackage{bbm}
\usepackage{mathrsfs}
\usepackage{epsfig}
\usepackage{soul}
\usepackage{graphicx}
\usepackage{amsfonts}
\usepackage{amsthm}
\usepackage[figuresright]{rotating}
\usepackage{amssymb}
\usepackage{amsmath}
\usepackage{dcolumn}
\usepackage{physics}
\usepackage{bm}
\usepackage{verbatim}
\usepackage[normalem]{ulem}
\usepackage[ruled,vlined,linesnumbered]{algorithm2e}
\usepackage{setspace}
\usepackage[dvipsnames]{xcolor}
\usepackage[colorlinks,linkcolor=blue,anchorcolor=blue,citecolor=blue,urlcolor=blue]{hyperref}
\usepackage{stackengine}
\usepackage{cleveref,enumitem}
\AddToHook{cmd/appendix/before}{\crefalias{section}{appendix}}
\usepackage{cleveref-forward}

\newtheorem{theorem}{Theorem}
\newtheorem{lemma}[theorem]{Lemma}

\newtheorem{fact}[theorem]{Fact}

\newtheorem{definition}{Definition}[section]

\newcommand{\PRLsec}[1]{\textit{#1}---}

\setlist[itemize]{leftmargin=*}
\setlist[enumerate]{leftmargin=*}

\newcommand{\lr}[1]{\left(#1\right)}
\newcommand{\E}[1]{\mathbb{E}\left[#1\right]}
\newcommand{\Esub}[2]{\mathbb E_{#1}\left[#2\right]}
\newcommand{\eps}{\epsilon}
\newcommand{\EE}{\mathbb{E}}
\newcommand{\Var}{\mathrm{Var}}

\newcommand{\cA}{\mathcal{A}}
\newcommand{\cB}{\mathcal{B}}
\newcommand{\cC}{\mathcal{C}}

\newcommand{\cE}{\mathcal{E}}

\newcommand{\cG}{\mathcal{G}}

\newcommand{\cP}{\mathcal{P}}
\newcommand{\cR}{\mathcal{R}}

\newcommand{\cT}{\mathcal{T}}
\newcommand{\cU}{\mathcal{U}}

\newcommand{\argmin}{\mathop{\mathrm{argmin}}}

\newcommand{\KL}{\mathrm{KL}}
\newcommand{\bern}{\operatorname{Bern}}

\def\ba#1\ea{\begin{align}#1\end{align}}

\begin{document}
\title{Optimal Lower Bounds for Hamiltonian Simulation}
\author{Alexander Zlokapa}
\thanks{These authors contributed equally to this work.}
\author{Richard R. Allen}
\thanks{These authors contributed equally to this work.}
\author{Aram W. Harrow}
\affiliation{Center for Theoretical Physics -- a Leinweber Institute, MIT}
\preprint{MIT-CTP/6075}
\begin{abstract}
For Hamiltonian $H = \sum_j h_j$, we prove asymptotically tight lower bounds on the gate and query complexities of simulating time evolution on a quantum computer. Our bounds hold for arbitrary term norms $\|h_j\|$, time $t$, and trace-distance error $\epsilon$. The matching upper bound (known as composite qDRIFT) consists of high-order Trotterization of the large terms and a randomized first-order Trotterization of the small terms. Unlike prior work that chooses worst-case $\|h_j\|$ to encode the computation of parity or other Boolean functions in time evolution, our proof is elementary and based on a local, bounded-degree classical Hamiltonian. Our work suggests that for many physical systems (e.g., power-law interactions), gate count must scale polynomially in $1/\epsilon$, contrary to the complexity suggested by counting coherent oracle queries such as those in the block-encoding model.
\end{abstract}

\maketitle

A central task in quantum simulation is to identify the best time evolution algorithms for many-body quantum systems under various realistic assumptions, such as the locality of interactions~\cite{haah2021quantum,childs2019nearly,childs2021theory}, symmetries~\cite{tran2021faster}, or properties of the initial state~\cite{csahinouglu2021hamiltonian,Zhao_2022,CB24}. Many natural problems have Hamiltonians of the form $H = \sum_{j=1}^L h_j$, where $L$ is large, $h_j$ is $k$-local (on spins, fermions and/or bosons), and many terms have small $\norm{h_j}$ relative to the largest terms; examples include power-law interactions and second-quantized electronic structure Hamiltonians~\cite{tran2019locality,ouyang2020compilation,motta2021low}.
Three options are commonly considered for simulating such Hamiltonians.
\begin{enumerate}
    \item \emph{Truncate small terms.} Instead of evaluating $e^{-iHt}$, the Hamiltonian is truncated to $\widetilde H$ containing only the largest Hamiltonian terms. A standard simulation algorithm (e.g., quantum signal processing or Trotterization~\cite{lloyd1996universal,childs2019nearly,childs2021theory,childs2012hamiltonian,berry2015simulating,low2017optimal,low2019hamiltonian}) is used to evaluate the unitary evolution $e^{-i\widetilde H t}$.
    \item \emph{Randomized simulation.} Randomized methods like the qDRIFT algorithm~\cite{campbell2019random,childs2019faster,chen2021concentration} implement Trotter steps $e^{-ih_j\Delta t}$ for Hamiltonian terms $h_j$ sampled proportional to $\norm{h_j}$. This has worse $t$ dependence than deterministic methods, but it avoids the error introduced by truncation.
    \item \emph{Hybrid algorithm.} One can optimize the tradeoff between the above two options: deterministically evolve the largest terms and randomly sample the smaller terms each Trotter step. This gives better results than truncation or randomization alone~\cite{ouyang2020compilation,hagan2023composite,jin2025partially,rajput2022hybridized,kiss2023importance,pocrnic2024composite}. 
\end{enumerate}
Our main result shows that the final option is in fact asymptotically \emph{optimal} in both the gate and query models for a natural and simple family of Hamiltonians. More concretely, we consider the family of local, bounded-degree Hamiltonians of the form
\begin{align}\label{eq:h-main-indep}
    H = \sum_{j=1}^L a_j h_j, \; a_1 \geq a_2 \geq \cdots \geq a_L > 0, \;\, \sum_{j=1}^L a_j = 1,
\end{align}
where each $h_j$ is a distinct, $k$-local Hermitian operator with $\norm{h_j} = 1$ (e.g., a Pauli string) and $k \geq 2$. The normalization of the coefficient vector $\vec a$ is without loss of generality by rescaling $t$. In fact, we will let $H$ be time-dependent with a coefficient vector $\vec a$. Specifically, we take the same conditions on $\vec a$ as \cref{eq:h-main-indep}, but set
\begin{align}\label{eq:h-main}
	H(\tau) = \sum_{j=1}^L a_j h_j(\tau).
\end{align}

We show lower bounds on the cost of implementing a channel $\cE_H$ that approximates ideal time evolution $\cT \mathrm{exp}(-i\int_0^t H(\tau)d\tau)$. Notably, our lower bounds will be optimal for \emph{any} choice of $a_1,\dots,a_L$. This departs from prior lower bounds for Hamiltonian simulation, which choose worst-case coefficients~\cite{berry2007efficient,childs2009limitations,berry2014exponential,berry2015hamiltonian,atia2017fast}. Before introducing our results and comparing to existing no-fast-forwarding theorems, we motivate our work by recalling well-known quantum algorithms for simulating time evolution.

\PRLsec{Prior algorithms.}%
Algorithms for Hamiltonian simulation have improved steadily since Feynman's initial proposal~\cite{feynman1982simulating}. Two of the main algorithmic paradigms are Trotter-Suzuki~\cite{lloyd1996universal,childs2019nearly,childs2021theory} and block encoding (known variously as LCU, QSP, or QSVT)~\cite{childs2012hamiltonian,berry2015simulating,low2017optimal,low2019hamiltonian}. These have nearly optimal cost with respect to simulation time $t$ and error $\eps$ but not with respect to the number of terms $L$ in the Hamiltonian.

Trotterization (a.k.a.~Trotter-Suzuki) approximates the unitary $e^{-iHt}$ with a unitary $U$ via a product formula, which can take advantage of locality via commutator structure~\cite{lloyd1996universal,childs2019nearly,childs2021theory}. At first order, an $R$-step Trotterization of the Hamiltonian \cref{eq:h-main-indep} gives
\begin{align}
    U = \lr{\exp\Big(-i\frac{t}{R} a_L h_L\Big) \cdots \exp\Big(-i\frac{t}{R} a_1 h_1\Big)}^R.
\end{align}
A total of $O(L t^2 / \eps)$ two-qubit gates suffices to achieve $\eps$ operator-norm error (the convention $\sum_{j} a_j = 1$ means we have absorbed the overall norm of $H$ into $t$). Here, we assume that $t\geq 1$; otherwise replace $t$ with $\max(t,1)$ in the costs for Trotter and LCU below.
More generally, higher-order Trotterization approximates time evolution using product formulas of the form
\begin{align}
    U = \prod_{i=1}^\Upsilon \prod_{j=1}^L e^{-ib_{(i,j)} a_{\pi_i(j)} h_{\pi_i(j)}t},
\end{align}
for real coefficients $b_{(i,j)}$ and a permutation $\pi_i$ that orders operator summands within stage $i$. If taken to sufficiently high order, Trotterization simulates the Hamiltonian of \cref{eq:h-main-indep} using
\begin{align}\label{eq:trotter}
    O\lr{\frac{Lt^{1+o(1)}}{\epsilon^{o(1)}}}
\end{align}
gates. Beyond Trotterization, another commonly considered algorithm, quantum signal processing, approximates the unitary $e^{-iHt}$ with a unitary $U$ using
\begin{align}\label{eq:qsp}
    O\lr{L \lr{t + \frac{\log1/\epsilon}{\log\log 1/\epsilon}}}
\end{align}
gates~\cite{childs2012hamiltonian,berry2015simulating,low2017optimal,low2019hamiltonian}.
The factor of $L$ above is not widely discussed; it arises from the cost of the block encoding. One might hope that this overhead reflects a poorly optimized block encoding strategy which could be generically improved. Our results show that this is not possible, unless we assume the Hamiltonian has additional structure.

For Hamiltonians with large $L$, such as the second-quantized electronic-structure Hamiltonian, neither of these methods is ideal. For an alternative approach, observe that the algorithms above prepare a \emph{unitary} approximation to $e^{-i H t}$. If, instead, we consider a \emph{randomized} algorithm that prepares a channel approximation $\cE(\rho) = \sum_k p_k U_k \rho U_k^\dagger$ to the corresponding unitary channel $\cU(\rho) = e^{-iHt}\rho e^{iHt}$, we can ensure that the cost of any individual $U_k$ in the channel is independent of $L$.
Indeed, this is the key insight of qDRIFT~\cite{campbell2019random}, which is perhaps the most natural algorithm for simulating Hamiltonians with many small terms.
The algorithm prepares a mixture of product formulas, $\{U_k\}$, each of which has the form of a first-order Trotter formula. Specifically, a length-$R$ product formula $U_k = V_R \cdots V_1$ is prepared by evolving for timestep $t/R$ under a random $h_j$ sampled with probability $a_j$, i.e.,
\begin{align}
  V_r = e^{-i(t/R)h_{j_r}}, \
  j_r \sim \vec{a}
\end{align}
Since $\E{h} = H$ and thus $\E{V_r} \approx e^{-iHt/R}$ in the large-$R$ limit, repeating this $R$ times can be shown to approximate evolution under $e^{-iHt}$ via a mixing lemma~\cite{campbell2017shorter,hastings2016turning}. Due to its structure, the gate count of this algorithm is similar to first-order Trotterization. Crucially, however, the cost is now \emph{independent} of $L$: $O(t^2/\eps)$ two-qubit gates suffice to achieve $\epsilon$ diamond-norm error.
While the $L$ dependence has been removed, the dependence on $t$ and $\epsilon$ is no longer as favorable as \cref{eq:trotter} or \cref{eq:qsp}.

A substantial body of follow-up work has explored how far randomization can be pushed beyond the original qDRIFT algorithm, including analysis of random permutations of product formulas~\cite{childs2019faster} and individual random product formulas~\cite{chen2021concentration}.
An important improvement to qDRIFT lies in a hybrid approach~\cite{ouyang2020compilation,hagan2023composite,jin2025partially,rajput2022hybridized,kiss2023importance,pocrnic2024composite}. In its simplest form, the hybrid approach chooses some $0 \leq K \leq L$ and deterministically evolves the largest $K$ terms via Trotterization, then randomly samples the remaining terms via qDRIFT. By construction, this hybrid algorithm cannot perform worse than any of the above algorithms. For example, combining Trotter and qDRIFT (following \cite{hagan2023composite}, see also Appendix \ref{sec:ub-HW}) gives a cost of
\begin{align}\label{eq:composite}
    O\lr{\min_{0 \leq K \leq L} \lr{Kt  \left( \frac{t}{\epsilon} \right)^{o(1)}+ \frac{t^2 \lambda_K^2}{\epsilon}}}
\end{align}
for tail mass
\begin{align}\label{eq:lambda}
    \lambda_K = \sum_{j > K} a_j \leq 1.
\end{align}
We refer to this approach as \emph{composite qDRIFT}. The same asymptotic upper bound also holds for the piecewise-constant, time-dependent Hamiltonian of \cref{eq:h-main}, provided the coefficients $\vec a$ are fixed and the Hamiltonian changes only on intervals of length $\Theta(1)$, by simulating each interval separately and concatenating the resulting channels.
These hybrid approaches are based on the plausible idea that, since Trotter scales well with $\sum_j a_j$ and qDRIFT scales well with $L$, we should divide up the Hamiltonian terms into a Trotter (or LCU) piece and a qDRIFT piece. 

One might hope for improvements beyond the performance of composite qDRIFT in~\cref{eq:composite}. For example, one might attempt to design more sophisticated, ``higher-order'' versions of qDRIFT which retain the $L$ independence but achieve better scaling with $t$ and $\eps$, as in deterministic methods.  However, the natural ways to attempt this~\cite{FSKKE22,NBA24,WW25-Trotter,W25-qFLO} lead not to sampling algorithms but to quasiprobability methods which (a) yield only estimates of expectation values rather than the time-evolved state itself; and
(b) still require a runtime $\sim \min(Lt, t^2)$. (We omit the $\eps$ dependence here, since the error metrics are different for state evolution and observable estimation.)

\PRLsec{Summary of results.}%
In this work, we show that composite qDRIFT is essentially tight in all parameters, ruling out the existence of any higher-order qDRIFT: to approximate the ideal simulation $\cU$ to $\epsilon$ trace distance error, we establish gate and query lower bounds that match the upper bound of \cref{eq:composite}.
The hard instances underlying these lower bounds are especially simple: at every time the Hamiltonian is local and classical. Indeed, for the query lower bound, the Hamiltonian is also non-interacting; to give a two-qubit gate lower bound, we introduce interactions on pairs of qubits, but the pairs are otherwise non-interacting.

\begin{theorem}[Gate lower bound]\label{thm:gate}
Given $t$, $\vec{a}$, and sufficiently small $\eps$, there exists a time-dependent Hamiltonian $H(\tau)=\sum_j a_jh_j(\tau)$ with $\|h_j(\tau)\|=1$ such that the following holds. Letting $\cU_H$ denote the quantum channel implementing ideal time evolution $\cT\mathrm{exp}\big(-i\int_0^t H(\tau) \dd\tau\big)$, if a channel $\cE_H$ using $G$ two-qubit gates satisfies
\begin{align}
    \frac{1}{2} \| \cE_H(\rho) - \cU_H(\rho) \|_1 \le \eps
\end{align}
for all density matrices $\rho$, then 
\begin{align}\label{eq:lb}
    G \geq \Omega\lr{\min_{0 \leq K \leq L} \lr{Kt + \frac{t^2\lambda_K^2}{\epsilon}}}.
\end{align}
\end{theorem}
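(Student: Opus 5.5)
The plan is to build a hard instance in which the $L$ terms act on disjoint qubit pairs, so the ideal evolution is a tensor product $\bigotimes_j \cU_j$. Then to show that each pair $j$ independently demands a number of gates $g_j$ that trades off against the error $\eps_j$ on that pair. Finally, to combine the per-pair tradeoffs, via an additivity-of-error argument and Cauchy--Schwarz, into the bound \cref{eq:lb}.

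\emph{Construction.} Each $h_j(\tau)$ will be a classical two-qubit operator on its own pair $(2j-1,2j)$, piecewise constant on unit-length intervals. An example is $Z\otimes Z$ with a sign or basis label $s_{j,m}$ chosen in interval $m$. Since the pairs do not interact, $\cU_H = \bigotimes_j \cU_j$, where $\cU_j$ is a product of $\lceil t\rceil$ controlled rotations. The per-interval angles $a_j s_{j,m}$ are distinct enough that implementing $\cU_j$ exactly on a suitable input carries about $t$ independent pieces of ``work''. I would add a gadget so the rotations are genuinely two-qubit, which explains the need for pairs. For small $a_j$, I would choose the labels so that the net angle is the coherent sum $\theta_j = a_j t$ rather than a random walk; this matches the qDRIFT scaling $t^2\lambda_K^2/\eps$.

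\emph{Per-pair tradeoff.} Let $\cE_H=\sum_k p_k \cV_k$ be a mixture of $G$-gate circuits. For each pair $j$, let $q_j$ be the probability (over $k$) that some gate touches pair $j$, and let $g_j$ be the expected number of gates touching it, so $\sum_j g_j \le 2G$ and $q_j\le g_j$. On branches that never touch pair $j$, the reduced action on that pair is a fixed (ancilla-free) channel independent of the $s_{j,m}$. On a test input such as $|{+}{+}\rangle$, the mixture $(1-q_j)\,\mathrm{id} + q_j \cdot(\text{anything})$ must reproduce a rotation by $\theta_j$. A direct calculation, essentially a single-qubit dephasing bound that says a mixture with mean rotation $\theta$ has purity loss $\gtrsim \theta^2(1/q-1)$, should give $\eps_j \gtrsim \min(1,\theta_j^2/q_j)$. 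Separately, for the Kt term I would show that if $g_j \ll t$, then some interval's label is never ``read''. A counting argument over the $2^{\Omega(t)}$ possible label sequences, in the spirit of the no-fast-forwarding argument but classical, then gives constant error on pair $j$. In short, either $g_j\gtrsim t$ or the pair behaves like the small-term regime.

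\emph{Combining.} I would take a product test state $\rho=\bigotimes_j \rho_j$ and lower bound the global trace distance by the sum of local infidelities. The point is that $1-F$ for nearly-pure product states is subadditive in the reverse direction: $1-\prod_j(1-\delta_j)\ge \tfrac12\min(1,\sum_j\delta_j)$. Combined with $\tfrac12\|\cdot\|_1\ge 1-F$, this gives $\eps \gtrsim \sum_{j\notin S}\theta_j^2/q_j$, where $S=\{j: g_j\gtrsim t\}$. Cauchy--Schwarz then gives $\sum_{j\notin S} q_j \ge (\sum_{j\notin S}\theta_j)^2/\eps = t^2(\sum_{j\notin S}a_j)^2/\eps$. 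Hence $G\gtrsim |S|\,t + t^2\lambda_S^2/\eps$, where $\lambda_S$ is the tail mass outside $S$. Because $a_j$ is sorted, replacing $S$ by the $|S|$ largest indices can only decrease $\lambda_S$, so this is at least the minimum over $K$ in \cref{eq:lb}.

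\emph{Main obstacle.} I expect the hardest step to be making the per-pair reduction rigorous when gates entangle different pairs and ancillas, and when the circuit choice depends adaptively on all labels. Tracing out everything except pair $j$ does not obviously preserve the ``untouched implies independent of $s_{j,\cdot}$'' structure once the untouched branch is correlated with other registers. My first attempt would be to condition on the branch index $k$ and treat the untouched branches as implementing $\mathrm{id}_j\otimes(\cdot)$ exactly. I would then bound the fidelity via the local reduced state only, which suffices because the ideal output is a pure product state. The requirement ``sufficiently small $\eps$'' should enter exactly here, in keeping every $\delta_j$ and $\sum_j\delta_j$ below the linear regime of $1-\prod_j(1-\delta_j)$.
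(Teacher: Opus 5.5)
\textbf{There are two genuine gaps.}

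\textbf{First gap: the construction itself.} As written, it cannot yield the theorem. Each $h_j(\tau)$ stays on the same pair $(2j-1,2j)$ for all $\tau$. So the ideal evolution is $\bigotimes_j U_j$ with each $U_j$ a single two-qubit unitary, and $L$ two-qubit gates implement $\cU_H$ exactly for every $t$ and $\eps$. Already for $L=1$ the theorem demands $G=\Omega(\min(t,t^2/\eps))=\Omega(t)$. No choice of labels $s_{j,m}$ changes this. The step ``some interval's label is never read'' also has no meaning in the gate model: the Hamiltonian is fixed, the circuit may hard-code it, and nothing is adaptive. That is a query-model idea.

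The missing idea is to put each unit interval on a \emph{fresh} block of $2L$ qubits. The paper does this with $2Lt$ qubits and $h_j(\tau)=\ketbra{11}$ on pair $j$ of block $\lceil\tau\rceil$. The evolution is then exactly a unit-time, time-independent instance whose coefficient vector is $\vec a$ with every entry repeated $t$ times. An elementary comparison turns a unit-time bound $\Omega(\min_N(N+\lambda_N^2/\eps))$ for that vector into $\Omega(\min_K(Kt+t^2\lambda_K^2/\eps))$.

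In that setting your product-state intuition does give the head term. Take the input $\ket{+}^{\otimes 2L}$. A branch's output is product across any bipartition of the connected components of its gate graph, since ancillas start in a product state. A suitable bipartition splits half, in $a_j^2$-weight, of the pairs not joined by two-qubit gates. Each split pair costs a fidelity factor $\cos^2(a_j/4)$, and at most $G$ pairs can be joined.

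\textbf{Second gap: the tail step.} It fails independently: no argument using only test states that are product across pairs can produce the $\lambda_K^2/\eps$ term. Take $t=1$, $a_j=1/L$ and $L\ge\eps^{-2}$, where the theorem requires $G=\Omega(1/\eps)$. Write $e^{-ia_j\ketbra{11}_j}=e^{-ia_j/4}e^{ia_j(Z_A+Z_B)/4}e^{-ia_jZ_AZ_B/4}$. The circuit with no two-qubit gates that applies only the single-qubit phases has overlap at least $\prod_j\cos a_j$ with the ideal output on every pure product input. Hence its trace distance is at most $(\sum_j a_j^2)^{1/2}=L^{-1/2}\le\eps$ on all product inputs. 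The same holds for your $Z\otimes Z$ terms with the identity circuit.

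Your per-pair tradeoff $\eps_j\gtrsim\theta_j^2/q_j$ is where this breaks. It presumes the mixture has mean rotation $\theta_j$, but a branch may simply leave the pair alone. That costs only $O(\theta_j^2)$ in infidelity, and such coherent errors add only in quadrature on product inputs. Small tail phases become visible only on inputs entangled across many tail pairs, such as $(\ket{0,0}+\ket{z_S})/\sqrt2$, where they add coherently to $\sum_{j\in S}a_j$.

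The paper's tail argument proceeds as follows.
\begin{enumerate}
\item It uses the off-diagonal elements $g_r(z)=\bra{z}\cE_{H,r}(\ketbra{z}{0,0})\ket{0,0}$, which factorize over the connected components of each branch. This also settles your concern about ancillas and correlations.
\item It forms the ratio $g_r(z^{(00)}_S)g_r(z^{(11)}_S)/(g_r(z^{(01)}_S)g_r(z^{(10)}_S))$. This cancels the single-qubit phases, so it depends only on $S\cap\tau_r$ with $|\tau_r|\le 2G$.
\item It shows via $\mathbb{E}_r|f-g_r|^2\le 8\eps$ that this ratio concentrates near $e^{-i\sum_{j\in S}a_j}$.
\item It draws $S$ with each tail index included independently with probability $(1\pm\eta)/2$, where $\eta\sim\sqrt\eps/\lambda_{K_*}$.
\end{enumerate}
The two target phases then differ by $\Omega(\sqrt\eps)$. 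Meanwhile Pinsker and the KL chain rule bound the total variation by $\sqrt{3\eta^2G}$, which forces $G\gtrsim\lambda_{K_*}^2/\eps$.
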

\noindent
\Cref{thm:gate} implies, in particular, that any simulation of the form
\begin{align}\label{eq:mix}
    \cE_H(\rho) = \sum_r p_r V_r \rho V_r^\dagger
\end{align}
for unitaries $V_r$ and probabilities $p_r$ must have at least one $V_r$ with at least $G$ two-qubit gates. Algorithms that satisfy \cref{eq:mix} include Trotterization, QSP, qDRIFT, and composite qDRIFT. We refer to \Cref{sec:gate} for a more formal definition of the gate count of a channel.

Before describing the proof of \Cref{thm:gate}, we give a similar result in the query model. The query model is commonly used to bound complexity by counting the number of uses of an oracle. A familiar example is in quantum signal processing (QSP), where one uses the select and prepare oracles for unitary $h_j$ (e.g., Pauli strings) defined as
\begin{align}
    \mathrm{SEL} = \sum_{j=1}^L \ketbra{j} \otimes h_j, \quad \mathrm{PREP} : \ket{0} \to \sum_{j=1}^L \sqrt{a_j}\ket{j}
\end{align}
to prepare a block-encoding of $H$~\cite{low2019hamiltonian}. Indeed, optimal algorithms for quantum simulation are known in terms of the query complexity to these oracles~\cite{low2017optimal,low2019hamiltonian}.

When given a Hamiltonian as a list of Pauli strings and coefficients, however, the select and prepare oracles do \emph{not} capture the full complexity of quantum simulation. For example, given $H = \sum_j a_j h_j$, it is possible to block-encode $H$ with only $O(1)$ queries to SEL and PREP, assuming $\sum_j a_j = 1$. However, this ignores the cost of constructing the oracles: the SEL unitary costs up to $O(L)$ gates to implement. This is why we report the cost of QSP in \cref{eq:qsp} as scaling with $L$.

In this work, we use an access model that significantly weakens the power of the oracles to more accurately reflect the cost of simulation.
We use a \emph{classical} oracle $O_H$ that takes as input an index $j \in [L]$ and time $\tau$ and returns a classical description of the coefficient $a_j$ and local operator $h_j(\tau)$ for the $j$-th term in the Hamiltonian.  In particular, it cannot be queried in superposition. We prove a query lower bound which matches the gate lower bound (without assuming any particular representation of $\cE_H$):

\begin{theorem}[Query lower bound]\label{thm:query}
Let $t, \vec a, \eps, \cU_H$ be as defined in \Cref{thm:gate}.  There is a distribution over Hamiltonians $H(\tau)$, with corresponding classical oracles $O_H$, such that any algorithm $\cE_H$ making $Q$ classical queries to $O_H$ and satisfying
\begin{align}
    \Esub{H}{\frac{1}{2} \| \cE_H(\rho) - \cU_H(\rho) \|_1} \le \epsilon
\end{align}
for all density matrices $\rho$ must satisfy
\begin{align}
    Q \ge \Omega\lr{\min_{0 \leq K \leq L} \lr{Kt + \frac{t^2\lambda_K^2}{\epsilon}}}.
\end{align}
\end{theorem}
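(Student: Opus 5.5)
The plan is to build a random, non-interacting, classical Hamiltonian. The unknown part of its time dependence is chopped into "cells", each carrying an independent random sign. Each cell the algorithm fails to query then costs a fixed amount of output error. The cell sizes are tuned so that every term $j$ contributes about $\max(t,\ t a_j/\theta)$ cells, each worth error about $\theta^2$. I have not verified every step; the fidelity estimate in the third paragraph is the delicate one.

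\emph{Hard distribution.} Place each term on its own qubit: $h_j(\tau)=s_j(\tau)Z_j$, where $s_j(\tau)\in\{\pm1\}$ is piecewise constant on consecutive cells of length $\Delta_j$. The signs on different cells and different terms are independent and uniform. The input is $\rho=\ketbra{+}^{\otimes L}$. The ideal output is then a product state in which qubit $j$ is rotated by the angle $\phi_j=a_j\sum_c \Delta_j s_{j,c}$, summed over the cells $c$ of term $j$. A classical query at time $\tau$ reveals only the sign of the single cell containing $\tau$. Hence, conditioned on the transcript, each unqueried cell contributes an independent uniform $\pm a_j\Delta_j$ to $\phi_j$.

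\emph{Choice of cell lengths.} Fix a threshold $\theta$, to be chosen later as a small constant multiple of $\eps/(t\lambda_K)$.
\begin{itemize}
\item For ``large'' terms ($a_j\ge\theta$), take $\Delta_j=\Theta(1)$. There are $\Theta(t)$ cells, each of weight $a_j\Delta_j\gtrsim\theta$.
\item For ``small'' terms with $\theta/t\le a_j<\theta$, take $\Delta_j=\theta/a_j$. There are $ta_j/\theta$ cells, each of weight exactly $\theta$.
\item For terms with $a_j<\theta/t$, fix the sign, so they carry no information. They are simply discarded, which only weakens the bound.
\end{itemize}
With $K=\#\{j:a_j\ge\theta\}$, the total number of cells is
\begin{align}
N(\theta)\gtrsim Kt+\frac{t}{\theta}\sum_{\theta/t\le a_j<\theta}a_j .
\end{align}
The last sum is $\lambda_K$ up to the mass below $\theta/t$, which is at most $L\theta/t$. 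This loss needs a small side argument: either shift $K$, or note that it only matters when the $Kt$ term already dominates.

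\emph{Error accounting.} Suppose $Q\le N/2$. Then at least $N/2$ cells are unqueried, and each has weight at least $\theta$. For any output $\sigma$ computed from the transcript, the target is the pure state $\ket{\psi(\phi)}$, and I would show
\begin{align}
\EE\big[1-\langle\psi(\phi)|\sigma|\psi(\phi)\rangle\big]\gtrsim\sum_j \Var(\phi_j\mid\text{transcript}).
\end{align}
The argument is the following. The fidelity of $\sigma$ with a random product target is at most the best fidelity with the averaged target. Each qubit's averaged state has Bloch-vector length shortened by a factor $\approx 1-\Var(\phi_j)/2$, provided all angles are $O(1)$; this uses sufficiently small $\eps$. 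Finally, for a product of nearly pure qubits the deficits add.

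Combining this with $1-F\le$ trace distance gives $\eps\gtrsim \theta^2\,(N-Q)\gtrsim\theta^2 N$. Substituting $N\gtrsim Kt+t\lambda_K/\theta$ and $\theta=c\,\eps/(t\lambda_K)$ with $c$ a large enough constant yields a contradiction. (The term $\theta^2 t\lambda_K/\theta$ alone is already $c\,\eps$.) Therefore
\begin{align}
Q\gtrsim N\gtrsim Kt+\frac{t^2\lambda_K^2}{\eps}\ \ge\ \min_{K'}\Big(K't+\frac{t^2\lambda_{K'}^2}{\eps}\Big).
\end{align}
When the self-consistent $\theta$ makes $K$ large, the $Kt$ term takes over. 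The large-term cells have weight $a_j\ge\theta$, so each unqueried one forces at least the same error. The query regime $Q\lesssim t$ is covered the same way.

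\emph{Main obstacle.} I expect the hard step to be the fidelity estimate displayed above. It must hold for arbitrary, possibly entangled, mixed outputs chosen adaptively from the transcript, and in expectation over $H$. The plan is to reduce it to an averaged target via concavity, then use a quantum-union-bound or product-fidelity argument to sum the single-qubit variance deficits without losing constant factors. The secondary difficulty is bookkeeping the choice of $\theta$ and $K$ so that the constructed instance matches the minimum over $K$ exactly, including the mass discarded below $\theta/t$.
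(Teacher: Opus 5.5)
\textbf{There is a genuine gap: your hard distribution does not produce the tail term.}

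With independent uniform signs and one qubit per term, term $j$ contributes a phase $\phi_j$ with $\EE\phi_j^2\le t^2a_j^2$, however you cut it into cells. So on your product witness, the unqueried tail costs at most about $t^2\sum_{j>K}a_j^2$ in infidelity, not $t^2\lambda_K^2$.

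Your long-cell trick ($\Delta_j=\theta/a_j$) recovers the right count only for terms with $ta_j\ge\theta\approx\eps/(t\lambda_K)$. The mass you discard below $\theta/t$ is not something that shifting $K$, or the $Kt$ term, can absorb:
\begin{itemize}
\item At $t=1$ the window $\theta/t\le a_j<\theta$ is empty, so you get no tail bound at all.
\item For $a_j=1/L$ with $L\gg t^2/\eps$, every term falls below $\theta/t$. The target is nonetheless $t^2/\eps$, attained at $K=0$, where $Kt=0$.
\item Once $L\gtrsim t^2/\eps^4$, your family is solved with zero queries by the identity channel, even if you keep random signs on every term. For any input $\psi$, $\EE_s|1-\langle\psi|U_s|\psi\rangle|\le t(\sum_j a_j^2)^{1/2}=t/\sqrt{L}$, so the expected trace distance is $O(\sqrt{t}\,L^{-1/4})\le\eps$.
\end{itemize}
So no choice of witness state or sharper fidelity estimate can rescue this distribution. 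The step you flagged as delicate is actually routine. It is H\"older's inequality, $\EE_{s\mid r}\langle\psi_s|\sigma_r|\psi_s\rangle\le\|\EE_{s\mid r}\psi_s\|$, plus factorization over qubits, exactly as in the paper's head bound. Your large-term argument is essentially that head bound and is fine.

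\textbf{The missing idea is to make the collective tail phase $\sum_{j>K_*}s_ja_j$ uncertain, not just its summands.} The paper draws the tail signs from $\cP=(\cP_++\cP_-)/2$. These are product distributions with a hidden common bias $\pm\eta$, where $\eta\approx 16\sqrt{\eps}/\lambda_{K_*}$. It then uses the cat witness $(|0^L\rangle+|1_{\rm tail}\rangle)/\sqrt{2}$.
\begin{itemize}
\item The two hypotheses shift the mean phase by $\Theta(\sqrt{\eps})$.
\item The fluctuations stay $O(\sqrt{\eps})$, because $\sum_{j>K_*}a_j^2\le\eps$ by the optimal-split lemma.
\item A second-moment argument shows each transcript's off-diagonal element must track the true phase to $O(\sqrt{\eps})$ on average. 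Hence the transcript distribution must distinguish $\cP_+$ from $\cP_-$.
\item By the KL chain rule and Pinsker's inequality, this needs $\Omega(1/\eta^2)=\Omega(\lambda_{K_*}^2/\eps)$ fresh tail queries.
\end{itemize}
The factor $t^2$ then comes from the reduction to a unit-time instance with each $a_j$ repeated $t$ times, giving tail mass $t\lambda$. It does not come from long cells.
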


This classical oracle model is most natural when considering Hamiltonians with many distinct terms that are computed classically because computing them in superposition would add significantly to the cost.  This is the situation, for example, in the second-quantized electronic-structure problem. Here the Hamiltonian is
\ba
 H = \sum_{i,j,k,l \in [N]} h_{ijkl} c_i^\dag c_j^\dag c_k c_l + \sum_{i,j \in [N]} h_{ij} c_i^\dag c_j.
\label{eq:electrons}
\ea
Each $h_{ijkl}$ and $h_{ij}$ coefficient is computed classically, and this list of $L=O(N^4)$ coefficients is used to determine the quantum circuit.  Our query lower bounds apply in this setting.  By contrast, they would not apply to first-quantized simulations in which the integrals determining $h_{ijkl}$ are computed in superposition.

\PRLsec{Proof outline}%
We now give a sketch of the underlying argument used for the proofs of \Cref{thm:gate,thm:query}. We start with a time-independent Hamiltonian and then prove lower bounds at asymptotically late times by introducing piecewise time-dependence (similarly to~\cite{haah2021quantum}). For this sketch, it suffices to consider the Hamiltonian
\begin{align}\label{eq:hsimp}
    H = \sum_{j=1}^L a_j \ketbra{1}_j, \quad \sum_{j=1}^L a_j = 1.
\end{align}
In the formal proofs provided in \Cref{sec:gate,sec:query}, we make only minor modifications to \cref{eq:hsimp}: we use $\ketbra{11}_{2j-1,2j}$ to obtain a two-qubit gate lower bound, and we insert random signs $a_j \to \pm a_j$ to show a query lower bound. Up to these technical details, however, the argument remains largely the same.

Start by assuming $a_1 = \cdots = a_L=1/L$ for simplicity. We will argue for a lower bound of
\begin{align}\label{eq:lbqdrift}
    \Omega(\min\{Lt, t^2/\epsilon\}).
\end{align}
The first challenge is to choose an initial state $\ket\psi$ on which time evolution is hard.  
The Mandelstam-Tamm bound says that an initial state $\ket{\psi}$ with energy standard deviation
\begin{align}
    \Delta_\psi(H) = \sqrt{\bra{\psi} H^2 \ket{\psi} - (\bra{\psi} H \ket{\psi})^2}
\end{align}
requires time $\Omega(1/\Delta_\psi(H))$ to evolve to an orthogonal state. To see error quickly, we should thus choose a state with large energy variance. For the Hamiltonian in \cref{eq:hsimp}, this variance is maximized by a cat state
\begin{align}\label{eq:cat}
    \ket{\rm cat} = \frac{| 0^L \rangle + | 1^L \rangle}{\sqrt 2}, \quad \Delta_{\rm cat}(H) = \frac{1}{2}
\end{align}
since time evolution adds phases coherently. So far, our argument follows \cite{chen2021concentration}. We deviate from prior work by considering evolution of $\ket{\psi} = \otimes_{m=1}^M \ket{\psi_m}$ for cat states $\ket{\psi_m}$ acting on disjoint qubits. We divide the Hamiltonian into corresponding patches $H=\sum_{m=1}^M H_m$, such that the sum of $a_j$ coefficients in each patch is $\sim 1/M$; this ensures $\Delta_{\psi_m}(H_m) \sim 1/M$. Suppose the algorithm approximates evolution for some small time $\tau$. If it does not apply any gates to the patch, then the fidelity against the true time-evolved state is
\begin{align}\label{eq:guess}
    \abs{\bra{\psi_m}e^{-i\tau H_m}\ket{\psi_m}}^2 \approx 1 - \tau^2 \Delta_{\psi_m}^2(H_m) \approx 1 - \frac{\tau^2}{M^2}
\end{align}
for $\tau \ll 1$. The algorithm incurs a similar error in the query model, since time evolution introduces either a positive or negative relative phase in the cat state depending on the sign of the interaction; if the algorithm did not query $H_m$ to determine the sign, one can bound the quality of its best guess similarly to \cref{eq:guess}. Since the patches do not interact in either the Hamiltonian or the initial state $\ket{\psi} = \otimes_{m=1}^M \ket{\psi_m}$, the fidelity of the final state is $(1-\tau^2/M^2)^M \approx 1-\tau^2/M$. Since the ideal time-evolved state $\ket{\psi(\tau)}$ is pure, the trace distance between the algorithm's output $\rho$ and $\psi(\tau) = \ketbra{\psi(\tau)}$ is bounded by the Fuchs-van de Graaf inequality as
\begin{align}
\frac{1}{2}\norm{\psi(\tau) - \rho}_1 \geq    1 - \lr{1 - \frac{\tau^2}{M}} = \frac{\tau^2}{M}  .
\end{align}
Hence, an algorithm that achieves $\epsilon$ error in trace distance must apply gates to at least
\begin{align}
    M \geq \frac{\tau^2}{\epsilon}
\end{align}
patches, or make a similar number of queries. When the coefficients $a_j$ are not uniform, one can use a similar argument on the smallest $L-K$ terms to obtain a lower bound of $\tau^2\lambda_K^2/\epsilon$ instead of $\tau^2/\epsilon$. We then choose $\tau$ appropriately and make the Hamiltonian piecewise time-dependent to address asymptotically large times, obtaining a lower bound of $\Omega(t^2\lambda_K^2/\epsilon)$ for simulating the tail of the Hamiltonian. To obtain the final lower bound of \cref{eq:lb}, it remains to analyze the largest terms. These are comparatively straightforward: if $ta_j$ is not small, then the cat state acquires a detectable relative phase upon evolution.

\PRLsec{In what sense is composite qDRIFT ``optimal"?}%
The history of Hamiltonian simulation algorithms is littered with claims of optimality or near-optimality, followed by further improvements~\cite{berry2007efficient,berry2014exponential,berry2015hamiltonian,low2017optimal,low2019nearly}.  This is because ``optimality'' means optimal scaling in the worst case with respect to some choice of parameters and access model, and different settings call for different models. In our case, we study optimality with respect to $t$, $\epsilon$, and the Hamiltonian term magnitudes $a_1,\dots,a_L$. These parameters determine the cost of most simulation algorithms, including qDRIFT, Trotterization, and LCU/QSP.  In other words, these algorithms (qDRIFT, etc.) do not exploit Hamiltonian structure beyond knowledge of $t$, $\epsilon$, $a_1,\ldots,a_L$, although in some cases their analyses do.

Our lower bounds are more fine-grained than previous no-fast-forwarding arguments~\cite{berry2007efficient,childs2009limitations,berry2014exponential,berry2015hamiltonian,atia2017fast,haah2021quantum}, which prove optimality in $t$ and $\epsilon$ but use worst-case $\vec a$. 
These prior lower bounds chose $\vec a$ so that Hamiltonian time evolution encodes a fine-tuned computational task, such as evaluating parity or another Boolean function~\cite{berry2007efficient,childs2009limitations,berry2014exponential,berry2015hamiltonian,atia2017fast,haah2021quantum}.  As a result, they did not exclude possibilities such as a hypothetical ``higher-order qDRIFT'' that might achieve subquadratic scaling in $t$ without a dependence on $L$.

Another conclusion from our lower bound is that improving beyond the performance of composite qDRIFT will require exploiting structure beyond the coefficient vector $\vec a$.   This could take the form of assumptions about the input state~\cite{csahinouglu2021hamiltonian,Zhao_2022,CB24,king2026quantum}, the spatial locality of the Hamiltonian~\cite{haah2021quantum,childs2019nearly,childs2021theory}, or even specializing to a specific Hamiltonian such as the electronic structure problem~\cite{Low_2025}.  Our bounds do not directly apply to these algorithms, but are still relevant in the cases when these more specialized algorithms use general-purpose Hamiltonian simulation as a subroutine.  For example, tensor factorization methods for the electronic structure problem combine a problem-specific fermionic swap network with a generic LCU/QSP outer loop~\cite{lee2021even,Low_2025}.  In short, the value of our lower bound, in a world where one works with specific and not worst-case Hamiltonians, is that we show that beating \cref{eq:lb} requires using additional assumptions about the Hamiltonian or input state beyond the coefficient vector $\vec a$.

\paragraph{Acknowledgments.}
AWH and RRA were supported by the US DOE, Office of Science, National Quantum Information Science Research Centers, Co-design Center for Quantum Advantage (C2QA) under contract number DE-SC0012704.  AWH was also supported by the MIT-IBM Watson AI lab and NSF grant  PHY-2325080.
AZ is funded by a Hertz fellowship and the Simons Foundation 
(MP-SIP-00001553, AWH).  We thank Minh Tran for many helpful discussions and Milad Marvian for sharing his group's draft of \cite{Marvian26}.
  
\bibliography{References-clean}
\clearpage
\onecolumngrid
\appendix
\setcounter{secnumdepth}{3}

\section{Upper bound from Hagan--Wiebe~\cite{hagan2023composite}}\label{sec:ub-HW}

Here we justify the claim that \cref{eq:composite} is an achievable cost for composite qDRIFT.  This fact is essentially proved in \cite{hagan2023composite} but the notation there requires some translation.

To avoid a conflict with the locality parameter $k$ used elsewhere in this paper, we denote the product-formula order parameter by $p$; thus the deterministic part uses an order $2p$ Suzuki formula.  Let
\begin{align}
    \Upsilon = 2\cdot 5^{p-1}
\end{align}
be the number of stages in the order $2p$ Suzuki formula.

\begin{fact}[Hagan--Wiebe, Theorem 2.1]\label{fact:HW}
Let $H=A+B$ be a time-independent Hamiltonian, where
\begin{align}
    A=\sum_{\ell=1}^{L_A} \alpha_\ell A_\ell,
    \qquad
    B=\sum_{\ell=1}^{L_B} \beta_\ell B_\ell,
    \qquad
    \norm{A_\ell}=\norm{B_\ell}=1,
    \qquad
    \alpha_\ell,\beta_\ell\geq 0,
\end{align}
and define $\lambda_B=\sum_{\ell=1}^{L_B}\beta_\ell$.  For an integer $N_B\geq 1$, the order $2p$ composite channel of Hagan--Wiebe, obtained by combining an order $2p$ Suzuki formula on the $A$ part with qDRIFT on the $B$ part, approximates the exact evolution channel for $H$ for time $t$ to diamond-norm error at most $\epsilon$ using at most
\begin{align}\label{eq:HW-cost}
    C_{\mathrm{HW}}(A,B,t,\epsilon,2p) \leq \Upsilon\left(\Upsilon L_A+N_B\right)
    \left\lceil \frac{(\Upsilon t)^{1+1/(2p)}4^{1/(2p)}}{\epsilon^{1/(2p)}} \lr{\frac{\Upsilon \alpha_{\mathrm{comm}}(A,2p) + \alpha_{\mathrm{comm}}(\{A,B\},2p)}{2p+1}}^{1/(2p)} + \frac{4\Upsilon\lambda_B^2t^2}{N_B\epsilon}\right\rceil
\end{align}
operator exponentials. Here
\begin{align}
    \alpha_{\mathrm{comm}}(F,2p) = \sum_{\gamma_1,\ldots,\gamma_{2p+1}}
    \lr{\prod_{r=1}^{2p+1} f_{\gamma_r}}
    \norm{[F_{\gamma_{2p+1}},[F_{\gamma_{2p}},\ldots,[F_{\gamma_2},F_{\gamma_1}]\ldots]]},
\end{align}
for $F=\sum_\gamma f_\gamma F_\gamma$ with $\norm{F_\gamma}=1$, and $\alpha_{\mathrm{comm}}(\{A,B\},2p)$ denotes the same commutator sum restricted to nested commutators containing at least one term from $A$ and at least one term from $B$. Note that $\{A,B\}$ in this expression is not an anticommutator; also, the sign convention in \cite{hagan2023composite} is opposite to ours, but the bound is unchanged.  
\end{fact}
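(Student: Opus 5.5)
The plan is the standard ``Trotter error plus qDRIFT mixing error, summed over steps'' argument. Split the evolution into $r$ segments of length $\delta=t/r$. On each segment, apply the order-$2p$ Suzuki formula $S_{2p}(\delta)$ to the two-block Hamiltonian $H=A+B$. The $A$-block is expanded into its $L_A$ individual exponentials. The $B$-block is treated as a \emph{single} operator, so each appearance of $e^{-ic_s\delta B}$ inside the $\Upsilon$ stages is replaced by a qDRIFT channel. That channel applies exponentials $e^{-i c_s\delta\lambda_B B_\ell/N}$ with $\ell$ sampled with probability $\beta_\ell/\lambda_B$. Counting exponentials per segment gives the prefactor $\Upsilon(\Upsilon L_A+N_B)$: $\Upsilon$ stages, each containing $\Upsilon L_A$ $A$-exponentials after expanding the recursive structure, plus $N_B$ qDRIFT samples. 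The number of segments $r$ will be the ceiling expression in \cref{eq:HW-cost}.

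The per-segment error has two pieces, which I combine by the triangle inequality for the diamond norm. Unitary channels are contractions, so errors add across stages and segments.

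\textbf{Deterministic error.} Bound $\|S_{2p}(\delta)-e^{-iH\delta}\|$ with the commutator-scaling Trotter bound of Childs et al. Since $B$ enters as one block, nested commutators consisting solely of $B_\ell$'s never appear. The resulting bound is
\begin{align}
O\!\left(\frac{(\Upsilon\delta)^{2p+1}}{2p+1}\bigl(\Upsilon\alpha_{\mathrm{comm}}(A,2p)+\alpha_{\mathrm{comm}}(\{A,B\},2p)\bigr)\right).
\end{align}
To get it, I expand the commutators of the block $B$ linearly into its terms $B_\ell$, and then bound the pure-$A$ and mixed contributions separately. The extra factor $\Upsilon$ on the pure-$A$ part comes from the multiplicity of $A$-exponentials across stages.

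\textbf{Randomized error.} Use the qDRIFT mixing lemma. Replacing $e^{-isB}$ by $N$ qDRIFT samples costs at most $2\lambda_B^2 s^2/N$ in diamond norm. With the samples split across the $\Upsilon$ stages and $|c_s|\le 1$, the total per segment is at most $4\Upsilon\lambda_B^2\delta^2/N_B$.

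Summing over $r$ segments gives total error
\begin{align}
r\cdot\frac{C(\Upsilon\delta)^{2p+1}}{2p+1}\,\alpha + r\cdot\frac{4\Upsilon\lambda_B^2\delta^2}{N_B},
\qquad \delta=t/r,
\end{align}
where $\alpha$ is the commutator combination above. I then require each piece to be at most $\epsilon/2$. The first condition holds for $r\ge (\Upsilon t)^{1+1/2p}(4\alpha/((2p+1)\epsilon))^{1/2p}$. The second holds for $r\ge 4\Upsilon\lambda_B^2t^2/(N_B\epsilon)$ up to the stated constants. Taking $r$ to be the ceiling of the sum of these two thresholds satisfies both conditions, and multiplying by the per-segment count gives \cref{eq:HW-cost}.

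The main obstacle is the deterministic error bound with $B$ as a block. The standard commutator Trotter bound assumes fixed individual terms, and it must be checked carefully that treating $B$ as one summand yields exactly the mixed sum $\alpha_{\mathrm{comm}}(\{A,B\},2p)$ with the stated $\Upsilon$ factors. I would redo the Childs et al.\ variation-of-parameters expansion for the Suzuki recursion, keeping track of which nested commutators contain at least one $A$ term. The qDRIFT piece and the final parameter choice are routine.
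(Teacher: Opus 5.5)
The paper gives no proof of this Fact; it imports Theorem 2.1 of Hagan--Wiebe and only translates notation. Your outline is the standard argument behind that theorem: per-step errors added via diamond-norm contractivity, commutator-scaling Trotter error, the qDRIFT mixing bound, and $r$ taken as the ceiling of the sum of two thresholds. In substance you follow the same route.

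\textbf{Which channel you analyze.} This needs straightening out. The count $\Upsilon(\Upsilon L_A+N_B)$ and the extra $\Upsilon$ on $\alpha_{\mathrm{comm}}(A,2p)$ come from a \emph{nested} construction. There is an outer $S_{2p}(\delta)$ on the two summands $A$ and $B$. Each $e^{-ic_s\delta A}$ is replaced by an inner order-$2p$ Suzuki formula over $A_1,\dots,A_{L_A}$. That inner formula costs $\Upsilon L_A$ exponentials and has error $O((\Upsilon\delta)^{2p+1}\alpha_{\mathrm{comm}}(A,2p)/(2p+1))$, incurred $\Upsilon$ times per step. Each $e^{-ic_s\delta B}$ is replaced by $N_B$ qDRIFT samples.

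Your opening description is different: a single $S_{2p}$ whose summands are the individual $\alpha_\ell A_\ell$ together with the block $B$. That is a flat channel with only $\Upsilon(L_A+N_B)$ exponentials per step and coefficient $1$ on $\alpha_{\mathrm{comm}}(A,2p)$. It satisfies the stated bound a fortiori, but your accounting of the two factors of $\Upsilon$ does not describe it. Say explicitly that each $A$-block exponential is implemented by an inner Suzuki formula.

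\textbf{Your main obstacle is not one.} The Childs et al.\ bound holds for arbitrary Hermitian summands. For the outer two-summand formula it produces block commutators $[X_{2p+1},[\dots,[X_2,X_1]]]$ with $X_i\in\{A,B\}$. These vanish when $X_1=X_2$, so every surviving one contains both blocks. Expanding $A$ and $B$ linearly and applying the triangle inequality then bounds the sum by $\alpha_{\mathrm{comm}}(\{A,B\},2p)$ directly. In the flat version, the pure-$A$ tuples supply $\alpha_{\mathrm{comm}}(A,2p)$. You do not need to redo the variation-of-parameters expansion.

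\textbf{A constant slip.} With per-step qDRIFT error $4\Upsilon\lambda_B^2\delta^2/N_B$, an even $\epsilon/2$ split needs $8$ rather than $4$ in the second threshold. Instead, write $x,y$ for the two thresholds and take $r\ge x+y$. The total error is then at most $\frac{\epsilon}{2}(x/r)^{2p}+\epsilon\,y/r\le\frac{\epsilon}{2}\frac{x}{x+y}+\epsilon\frac{y}{x+y}\le\epsilon$, which gives the stated expression exactly.
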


We now apply \Cref{fact:HW} to our Hamiltonian family of the form
\begin{align}
    H=\sum_{j=1}^L a_jh_j,
    \qquad
    a_1\geq a_2\geq \cdots \geq a_L>0,
    \qquad
    \sum_{j=1}^L a_j=1.
\end{align}
We translate notation as
\begin{align}
    L_A=K\in\{0,1,\ldots,L\},
    \qquad
    A = \sum_{j=1}^K a_jh_j,
    \qquad
    B = \sum_{j>K} a_jh_j,
    \qquad
    \lambda_B=\lambda_K=\sum_{j>K}a_j.
\end{align}
To bound the commutator quantities appearing in \cref{eq:HW-cost}, we note that each $h_j$ has
operator norm one, and thus any $(2p+1)$-fold nested commutator has norm at most $2^{2p}$; this gives
\begin{align}
    \alpha_{\mathrm{comm}}(A,2p) \leq 2^{2p}\left(\sum_{j\leq K}a_j\right)^{2p+1} = 2^{2p}(1-\lambda_K)^{2p+1}.
\end{align}
Similarly, the mixed commutator sum is bounded by the total weight of all ordered
$(2p+1)$-tuples containing at least one index from $A$ and at least one index from $B$:
\begin{align}
    \alpha_{\mathrm{comm}}(\{A,B\},2p)
    &\leq
    2^{2p}
    \left[
        1-(1-\lambda_K)^{2p+1}-\lambda_K^{2p+1}
    \right].
\end{align}
We can thus bound one of the terms in \cref{eq:HW-cost} as
\begin{align}\label{eq:HW-comm-bound}
    \Upsilon\alpha_{\mathrm{comm}}(A,2p)
    +\alpha_{\mathrm{comm}}(\{A,B\},2p)
    &\leq
    2^{2p}
    \left[
        \Upsilon(1-\lambda_K)^{2p+1}
        +1-(1-\lambda_K)^{2p+1}-\lambda_K^{2p+1}
    \right] \leq
    2^{2p}\Upsilon,
\end{align}
where the final inequality uses $0\leq \lambda_K\leq 1$ and $\Upsilon\geq 1$.

Since the cost at the endpoints $K=0$ and $K=L$ are given individually by qDRIFT and Trotter, we only need to show the upper bound for $1\leq K<L$. Choosing $N_B=K$ in \Cref{fact:HW}, we find by \cref{eq:HW-comm-bound} that
\begin{align}
    C_{\mathrm{HW}}(A,B,t,\epsilon,2p)
    &\leq
    \Upsilon(\Upsilon+1)K
    \left\lceil
        4^{1/(2p)}
    \Upsilon^{1+1/(2p)}
    \left(\frac{2^{2p}\Upsilon}{2p+1}\right)^{1/(2p)} t\left(\frac{t}{\epsilon}\right)^{1/(2p)}
        +
        \frac{4\Upsilon\lambda_K^2t^2}{K\epsilon}
    \right\rceil
\end{align}
and thus
\begin{align}\label{eq:fixed-p-bound-middle}
    C_{\mathrm{HW}}(A,B,t,\epsilon,2p)
    =
    O_p\left(
        Kt\left(\frac{t}{\epsilon}\right)^{1/(2p)}
        +
        \frac{t^2\lambda_K^2}{\epsilon}
    \right).
\end{align}
Optimizing over $K$ gives
\begin{align}\label{eq:composite-fixed-p}
    C_{\mathrm{comp}}
    =
    O_p\left(
        \min_{0\leq K\leq L}
        \left\{
            Kt\left(\frac{t}{\epsilon}\right)^{1/(2p)}
            +
            \frac{t^2\lambda_K^2}{\epsilon}
        \right\}
    \right).
\end{align}
Since $p$ is arbitrary, this equivalently means that for every fixed $\eta>0$ one may choose a
fixed product-formula order $p$ large enough so that $1/(2p)\leq \eta$, yielding our claimed upper bound of
\begin{align}
    C_{\mathrm{comp}}
    =
    O\left(
        \min_{0\leq K\leq L}
        \left\{
            Kt\left(\frac{t}{\epsilon}\right)^{o(1)}
            +
            \frac{t^2\lambda_K^2}{\epsilon}
        \right\}
    \right).
\end{align}

\section{Lower bound preliminaries}\label{appendix: prelims}

In this section, we define the hard Hamiltonians used to prove~\Cref{thm:gate,thm:query}. In both cases, the Hamiltonians will be local and classical. The hard Hamiltonian for the query lower bound will be completely non-interacting, while the hard Hamiltonian for the gate lower bound will be interacting only on individual pairs of qubits (so as to prove a two-qubit gate lower bound, allowing arbitrarily many single-qubit gates).

Before defining the hard time-dependent Hamiltonians, we introduce hard time-independent Hamiltonians of the form in~\cref{eq:h-main-indep}, which can be used to prove the desired lower bounds for $t = \Theta(1)$. Then, we will describe a simple, piecewise-constant generalization of these to prove~\Cref{thm:gate,thm:query} also for asymptotically late times $t$. 

\begin{definition}[Time-independent Hamiltonian for gate lower bound]\label{def: time_ind_gate_rra} Let $\pi/2 \ge a_1 \ge a_2 \ge \cdots \ge a_L > 0$. Consider a $2L$ qubit system, with qubits grouped into disjoint pairs $(A_1, B_1), \dots, (A_L, B_L)$. For $j \in [L]$, let $\ketbra{11}_j$ denote the projector onto the $\ket{11}$ state of the pair $(A_j, B_j)$. Define 
\begin{align}
    H_{2q}(\vec a) = \sum_{j=1}^L a_j \ketbra{11}_j.
\end{align}
\end{definition}

\begin{definition}[Time-independent Hamiltonian for query lower bound]\label{def: time_ind_query_rra}
    Let $\pi/2 \ge a_1 \ge a_2 \ge \cdots \ge a_L > 0$ and let $s \in \{\pm 1\}^L$. Consider an $L$ qubit system. For $j \in [L]$, let $\ketbra{1}_j$ denote the projector onto the $\ket{1}$ state of the $j$-th qubit. Define
    \begin{align}
        H_{1q}(\vec a, s) = \sum_{j=1}^L s_j a_j \ketbra{1}_j.
    \end{align}
\end{definition}
\noindent Note that we could equivalently define $H_{1q}(\vec a, s) = \sum_{j=1}^L s_j \tilde{a}_j Z_j$ for $\tilde{a}_j = -a_j / 2$ and $Z_j$ the Pauli-$Z$ operator on the $j$-th qubit. It is necessary to consider a \emph{family} of Hamiltonians for the query lower bound, rather than a fixed Hamiltonian, because otherwise an algorithm could ``hard-code'' the coefficient vector $\vec a$ and simulate the Hamiltonian without any oracle queries. Here, our family is especially simple: we only vary the signs of each Hamiltonian term.

We now define time-dependent generalizations of these two Hamiltonians. The generalization is simple: we keep the coefficient vector $\vec a$ fixed but change the Hamiltonian terms piecewise in time, so that after each unit time step the Hamiltonian acts on a fresh block of qubits. More precisely, we have the following:

\begin{definition}[Time-dependent Hamiltonian for gate lower bound\label{def: time_dep_gate_rra}] Let $t$ be a positive integer, and let $\pi/2 \ge a_1 \ge a_2 \ge \cdots \ge a_L > 0$. Consider a $2Lt$ qubit system, with qubits grouped into $t$ blocks, and within each block $r \in [t]$ grouped into disjoint pairs $(A_1^{(r)}, B_1^{(r)}), \dots, (A_L^{(r)}, B_L^{(r)})$. Let $H_{2q}^{(r)}(\vec a)$ denote the Hamiltonian in~\Cref{def: time_ind_gate_rra} for the $r$-th block of qubits. For $\tau \in (0,t]$, define the time-dependent Hamiltonian
\begin{align}
    H_{2q}(\vec a, \tau)=H_{2q}^{(\lceil \tau \rceil)}(\vec a).
\end{align}
\end{definition}

\begin{definition}[Time-dependent Hamiltonian for query lower bound\label{def: time_dep_query_rra}] Let $t$ be a positive integer, let $\pi/2 \ge a_1 \ge a_2 \ge \cdots \ge a_L > 0$, and let $s = (s^{(1)}, \dots, s^{(t)})$ with $s^{(r)} \in \{\pm 1\}^L$ for each $r \in [t]$. Consider an $Lt$ qubit system, with qubits grouped into $t$ blocks. Let $H_{1q}^{(r)}(\vec a,s^{(r)})$ denote the Hamiltonian in~\Cref{def: time_ind_query_rra} for the $r$-th block of qubits. For $\tau \in (0,t]$, define the time-dependent Hamiltonian
\begin{align}
    H_{1q}(\vec a, s, \tau) = H_{1q}^{(\lceil \tau \rceil)}(\vec a,s^{(\lceil \tau \rceil)}) .
\end{align}
\end{definition}

\noindent We can now formally introduce the query model we consider.

\begin{definition}[Classical oracle access]\label{def: oracle}
	A classical query algorithm for a Hamiltonian family $H(\tau)=\sum_{j=1}^L a_j h_j(\tau)$ has access to a deterministic classical function $O_H$. A query consists of an ordinary classical input $(j,\tau)\in [L]\times(0,t]$, and the oracle returns a classical description of $a_j$ and $h_j(\tau)$.
\end{definition}
\noindent The algorithm may be randomized and adaptive, with later queries depending on previous answers, but it never queries $O_H$ in superposition. After at most $Q$ oracle calls, the algorithm outputs a quantum channel $\cE_H$; conditioned on a transcript $r$, the output channel may depend only on $r$ and on the algorithm's internal randomness. For the time-independent query hard instances $H_{1q}(\vec a,s)$, the coefficients $a_j$ and projectors $\ketbra{1}_j$ are fixed and known, so $O_H$ is equivalent to the sign oracle $O_s(j)=s_j$. For the time-dependent query hard instances $H_{1q}(\vec a,s,\tau)$, this becomes the block-indexed sign oracle $O_s(b,j)=s_j^{(b)}$, where $b=\lceil \tau \rceil$. We count each call to $O_s$ as one query.

Having defined these time-dependent instances, we now show that it suffices to prove gate and query lower bounds for the time-independent Hamiltonians in order to conclude the main results~\Cref{thm:gate,thm:query}:

\begin{lemma}[Time-dependent to time-independent reduction]\label{lemma:time_dep_to_ind_reduction} A gate (query) lower bound of
\begin{align}
    \Omega\lr{\min_{0 \leq K \leq L} \lr{K + \frac{\lambda_K^2}{\epsilon}}}
\end{align}
for the time-independent Hamiltonian in~\Cref{def: time_ind_gate_rra} (\ref{def: time_ind_query_rra}) implies a gate (query) lower bound of 
\begin{align}\label{eq: time_dep_lb_to_show}
    \Omega\lr{\min_{0 \leq K \leq L} \lr{Kt + \frac{t^2\lambda_K^2}{\epsilon}}}
\end{align}
for the time-dependent Hamiltonian in~\Cref{def: time_dep_gate_rra} (\ref{def: time_dep_query_rra}).    
\end{lemma}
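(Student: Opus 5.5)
The key observation is that the time-dependent instance is itself a time-independent instance in disguise, so I would prove the lemma by a direct embedding rather than by gluing together $t$ separate simulations. In \Cref{def: time_dep_gate_rra}, during $(r-1,r]$ the Hamiltonian is $H_{2q}^{(r)}(\vec a)$, which acts only on block $r$. Hamiltonians on different blocks act on disjoint qubits and hence commute. So the time-ordered evolution factorizes:
\begin{align}
\cT\mathrm{exp}\Big(-i\int_0^t H_{2q}(\vec a,\tau)\,\dd\tau\Big)=\prod_{r=1}^t e^{-iH_{2q}^{(r)}(\vec a)}=\exp\Big(-i\sum_{r=1}^t H_{2q}^{(r)}(\vec a)\Big).
\end{align}
The right-hand side is exactly unit-time evolution under the time-independent Hamiltonian $H_{2q}(\vec a')$ of \Cref{def: time_ind_gate_rra}. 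Here $\vec a'\in\mathbb R^{Lt}$ is obtained by repeating each $a_j$ $t$ times, and the pairs $(A^{(r)}_j,B^{(r)}_j)$ are relabeled. Each entry still satisfies $a'_i\le\pi/2$, so $\vec a'$ is admissible in \Cref{def: time_ind_gate_rra}, which only bounds individual coefficients and does not fix their sum.

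The query case works the same way. The block-indexed sign oracle $O_s(b,j)=s^{(b)}_j$ is literally the sign oracle of $H_{1q}(\vec a',s')$, where $s'$ concatenates $s^{(1)},\dots,s^{(t)}$. A uniformly random $s$ corresponds to a uniformly random $s'$. Hence any gate-$G$ or query-$Q$ simulation of the time-dependent instance is, verbatim, a simulation of the time-independent instance with the same error $\eps$. The hypothesized bound then gives
\begin{align}
G \text{ (or } Q) \ge \Omega\Big(\min_{0\le K'\le Lt}\big(K'+\lambda'^2_{K'}/\eps\big)\Big),
\end{align}
where $\lambda'_{K'}$ is the tail mass of the sorted vector $\vec a'$.

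It remains to compare this minimum with \cref{eq: time_dep_lb_to_show}. Write $K=\lceil K'/t\rceil$, so that $t(K-1)\le K'\le tK$. After sorting, the first $K'$ entries of $\vec a'$ are contained in the first $tK$ entries, namely $t$ copies each of $a_1,\dots,a_K$. Therefore
\begin{align}
\lambda'_{K'}\ge t\sum_{j>K}a_j=t\lambda_K,
\end{align}
and the $K'$ term of the minimum is at least $t(K-1)+t^2\lambda_K^2/\eps$.

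For $K\ge 2$, we have $t(K-1)\ge tK/2$, which gives the claimed $\Omega(tK+t^2\lambda_K^2/\eps)$ directly. For $K=0$ (i.e.\ $K'=0$) the bound is exact, since $\lambda'_0=t\lambda_0$.

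The boundary case $K=1$, i.e.\ $1\le K'\le t$, is the step I expect to be the main obstacle, because the rounding loses the additive $t$. There I would instead use the finer estimate
\begin{align}
\lambda'_{K'}=t\lambda_0-K'a_1=t\lambda_1+(t-K')a_1.
\end{align}
The $K'$ term is then at least
\begin{align}
K'+\big(t\lambda_1+(t-K')a_1\big)^2/\eps.
\end{align}
If $K'\ge t/2$, this is $\Omega(t+t^2\lambda_1^2/\eps)$. If $K'<t/2$, then $\lambda'_{K'}\ge t(\lambda_1+a_1/2)\ge t\lambda_0/2$, so the term is $\Omega(t^2\lambda_0^2/\eps)$, which is the $K=0$ candidate of \cref{eq: time_dep_lb_to_show}.

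In every case, each term of the time-independent minimum dominates, up to constants, some term of the time-dependent minimum. This yields \cref{eq: time_dep_lb_to_show}. The requirement that $\eps$ be sufficiently small is inherited unchanged, since the embedding leaves $\eps$ unchanged.
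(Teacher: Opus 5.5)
Your proposal is correct and follows the paper's proof. Both use the observation that the blocks are disjoint, so the evolution is unit-time evolution of the time-independent instance with the $t$-fold repeated coefficient vector (and the block-indexed sign oracle becomes the enlarged sign oracle), followed by a comparison of tail sums. The paper writes the minimizer as $N_*=K_*t+r_*$ and applies $u+((1-u)x+y)^2\ge\frac14\min\{(x+y)^2,1+y^2\}$ at a general block; this is exactly your $K'\ge t/2$ versus $K'<t/2$ split, which you only need in the first block because $t(K-1)\ge tK/2$ handles $K\ge 2$.

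One small inaccuracy: the hard query distribution is not uniform but the biased mixture $\cP=(\cP_++\cP_-)/2$ from the query lemmas. Since the relabeling $s\leftrightarrow s'$ is a bijection, it pulls back whatever hard distribution the time-independent bound uses, so the argument is unaffected.
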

\begin{proof}
    The same reduction applies to the gate and query hard instances. In either case, the time-dependent Hamiltonian acts on fresh blocks during consecutive unit-time intervals. More explicitly, for the gate instance one has $h_j^{(b)}=\ketbra{11}_j^{(b)}$ and signs $\theta_j^{(b)}=1$, while for the query instance one has $h_j^{(b)}=\ketbra{1}_j^{(b)}$ and signs $\theta_j^{(b)}=s_j^{(b)}$. Thus the time-dependent Hamiltonian can be written uniformly as
    \begin{align}
        H(\tau)=\sum_{j=1}^L \theta_j^{(b)} a_j h_j^{(b)}, \qquad \tau \in (b-1,b],\quad b\in[t].
    \end{align}
    Since all terms appearing at different times act on disjoint blocks, the ideal time-evolution unitary is
    \begin{align}
        U_H
        &= \prod_{b=1}^t \exp\left(-i\sum_{j=1}^L \theta_j^{(b)}a_jh_j^{(b)}\right)
        = \exp\left(-i\sum_{b=1}^t\sum_{j=1}^L \theta_j^{(b)}a_jh_j^{(b)}\right).
    \end{align}
    Hence the time-dependent instance is equivalent, after relabeling the pair $(b,j)$ as a single index, to a unit-time, time-independent instance with coefficient vector
    \begin{align}
        a^{(t)}=(\underbrace{a_1,\ldots,a_1}_{t},\underbrace{a_2,\ldots,a_2}_{t},\ldots,\underbrace{a_L,\ldots,a_L}_{t}).
    \end{align}
    In the query case, the time-dependent oracle $O_s(b,j)$ is likewise just the oracle for this enlarged time-independent instance under the same relabeling. Let $\lambda_N^{(t)}=\sum_{j>N}a_j^{(t)}$ denote the tail sum of the repeated vector. The assumed unit-time lower bound therefore gives
    \begin{align}
        \Omega\left(\min_{0\leq N\leq Lt}\left(N+\frac{(\lambda_N^{(t)})^2}{\epsilon}\right)\right).
    \end{align}
    It remains to compare this expression to~\cref{eq: time_dep_lb_to_show}. Let $N_*\in\argmin_N(N+(\lambda_N^{(t)})^2/\eps)$. If $N_*=Lt$, the desired inequality is immediate. Otherwise write $N_*=K_*t+r_*$ for $0\leq K_*<L$ and $0\leq r_*<t$, and set $u_*=r_*/t$. Then
    \begin{align}
        \lambda_{N_*}^{(t)}=t\lambda_{K_*+1}+(t-r_*)a_{K_*+1},
    \end{align}
    so
    \begin{align}
        N_*+\frac{(\lambda_{N_*}^{(t)})^2}{\epsilon}
        =K_*t+t\left(u_*+\frac{t}{\eps}((1-u_*)a_{K_*+1}+\lambda_{K_*+1})^2\right).
    \end{align}
    Applying $u+((1-u)x+y)^2\geq \frac14\min\{(x+y)^2,1+y^2\}$ for $u\in[0,1]$ and $x,y\geq0$, with $x=\sqrt{t/\eps}\,a_{K_*+1}$ and $y=\sqrt{t/\eps}\,\lambda_{K_*+1}$, gives
    \begin{align}
        N_* + \frac{(\lambda^{(t)}_{N_*} )^2}{\epsilon} &\ge K_* t + \frac{t}{4} \min\Big( \frac{t}{\eps}(a_{K_* + 1} + \lambda_{K_*+1} )^2, 1 + \frac{t}{\eps} \lambda_{K_* + 1}^2\Big)
        \\
        &\ge \frac{1}{4} \min\Big( K_* t + \frac{t^2 \lambda_{K_*}^2}{\eps}, (K_* + 1) t + \frac{t^2 \lambda_{K_* + 1}^2}{\eps} \Big)
        \\
        &\ge  \frac{1}{4} \min_{0 \leq K \leq L} \Big( Kt + \frac{t^2\lambda_K^2}{\epsilon} \Big),
    \end{align}
    which proves the claimed time-dependent lower bound.
\end{proof}

In both the gate and query lower bounds, we will prove separate ``head'' and ``tail'' lower bounds, capturing the two qualitatively different contributions to the complexity of composite qDRIFT. We provide a brief lemma characterizing the optimal $K_*$ which minimizes the asymptotic cost $K + \lambda_K^2/\eps$ of composite qDRIFT.
\begin{lemma}[Optimal deterministic-randomized split]\label{lemma: optimal_k} Let $K_* \in \argmin_{0 \le K \le L}(K + \lambda_K^2 / \eps)$. Then, $\sum_{j > K_*} a_j^2 \le \eps$ and (if $K_* > 0$) $\eps \le a_{K_*}^2 + 2 a_{K_*} \lambda_{K_*}$.
\end{lemma}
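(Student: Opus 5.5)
Both claims follow from comparing the cost $f(K) = K + \lambda_K^2/\eps$ at $K_*$ with its value at the neighbouring indices $K_*+1$ and $K_*-1$. The one identity needed is that the tail sums satisfy $\lambda_{K} = \lambda_{K+1} + a_{K+1}$, with $\lambda_L = 0$ and $\lambda_0 = 1$.

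\emph{Second claim.} Take $K_* > 0$. Optimality gives $f(K_*) \le f(K_*-1)$, that is,
\begin{align}
K_* + \frac{\lambda_{K_*}^2}{\eps} \le K_* - 1 + \frac{(\lambda_{K_*}+a_{K_*})^2}{\eps}.
\end{align}
Expanding the square and cancelling $\lambda_{K_*}^2/\eps$ leaves $1 \le (a_{K_*}^2 + 2a_{K_*}\lambda_{K_*})/\eps$. This is exactly $\eps \le a_{K_*}^2 + 2a_{K_*}\lambda_{K_*}$.

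\emph{First claim.} Comparing with $K_*+1$ alone only gives $\eps \ge a_{K_*+1}^2 + 2a_{K_*+1}\lambda_{K_*+1}$, which controls one term, not the whole sum $\sum_{j>K_*} a_j^2$. So instead compare $K_*$ with $L$. Since $\lambda_L = 0$, optimality gives $K_* + \lambda_{K_*}^2/\eps \le L$, hence
\begin{align}
\lambda_{K_*}^2 \le \eps\,(L - K_*).
\end{align}
This involves $L-K_*$ rather than $1$, so it is too weak by itself.

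The better route is to telescope the one-step comparisons. For every $K \ge K_*$ we have $f(K_*) \le f(K)$, and the increments of $f$ are
\begin{align}
f(K+1) - f(K) = 1 - \frac{a_{K+1}^2 + 2a_{K+1}\lambda_{K+1}}{\eps}.
\end{align}
Summing these from $K_*$ to $L-1$ gives $0 \le f(L) - f(K_*) = (L - K_*) - \lambda_{K_*}^2/\eps$. Since $\lambda_{K_*}^2 = \sum_{j>K_*} a_j^2 + 2\sum_{K_*<i<j} a_i a_j$, this is again the inequality above and still carries the factor $L-K_*$.

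Here the monotonicity $a_{K_*+1} \ge a_{K_*+2} \ge \cdots$ becomes the key ingredient. Put $x = a_{K_*+1}$. Optimality against $K_*+1$ gives $x^2 + 2x\lambda_{K_*+1} \le \eps$, and therefore
\begin{align}
\sum_{j>K_*} a_j^2 \le x^2 + x\sum_{j>K_*+1} a_j = x^2 + x\lambda_{K_*+1} \le x^2 + 2x\lambda_{K_*+1} \le \eps.
\end{align}
The first inequality holds because every $a_j$ with $j > K_*+1$ is at most $x$, so $a_j^2 \le x a_j$; the second simply drops a nonnegative term $x\lambda_{K_*+1}$.

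If $K_* = L$, the sum is empty and both claims are immediate, the second being the $K_*-1$ comparison above. I expect no real obstacle. The only subtle point is using monotonicity of the $a_j$ to turn the single-step comparison into a bound on the entire tail sum of squares.
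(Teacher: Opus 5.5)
Your proposal is correct and matches the paper's proof. Both claims come from comparing $f(K_*)$ with $f(K_*\pm 1)$, and your bound $\sum_{j>K_*}a_j^2\le x^2+x\lambda_{K_*+1}$ is the paper's $\sum_{j>K_*}a_j^2\le a_{K_*+1}\lambda_{K_*}$, obtained by the same use of monotonicity. The intermediate detours (comparing with $K=L$, telescoping) are unnecessary and can be dropped.
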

\begin{proof}
    We first show that $\sum_{j > K_*} a_j^2 \le \eps$. If $K_* = L$, this is trivial. Else, by the definition of the minimum, we have
    \begin{align}
        K_* + 1 + \frac{\lambda_{K_*+1}^2}{\eps} \ge K_* + \frac{\lambda_{K_*}^2}{\eps}.
    \end{align}
    After using $\lambda_{K_*+1} = \lambda_{K_*} - a_{K_* + 1}$ and rearranging, we obtain
    \begin{align}
       2 a_{K_* + 1} \lambda_{K_*} -  a_{K_* + 1}^2 \le \eps.
    \end{align}
    We have $a_{K_* + 1} \lambda_{K_*} \le a_{K_* + 1} (2\lambda_{K_*} - a_{K_*+1})$, since $a_{K_*+1} \le \lambda_{K_*}$. Therefore, $a_{K_*+1} \lambda_{K_*} \le \eps$. Finally, since the $a_j$ are sorted in non-increasing order, we have
    \begin{align}
        \sum_{j > K_*} a_j^2 \le a_{K_* + 1} \lambda_{K_*} \le \eps~,
    \end{align}
    as desired. To show that $\eps \le a_{K_*}^2 + 2 a_{K_*} \lambda_{K_*}$, we proceed similarly, but use the other condition on the minimum that $K_* - 1 + \lambda_{K_* - 1}^2 / \eps \ge K_* + \lambda_{K_*}^2 / \eps$.
\end{proof}

\section{Gate lower bound}\label{sec:gate}

In this section, we prove Theorem~\ref{thm:gate}, demonstrating a gate lower bound for Hamiltonian simulation. More specifically, we prove a gate lower bound for any Hamiltonian simulation algorithm which can be expressed as a convex combination of quantum channels whose Stinespring dilations each use at most $G$ two-qubit gates (but may apply arbitrarily many single-qubit gates). As an immediate corollary, this proves a lower bound for mixed unitary channels of bounded two-qubit gate complexity, like those used in Trotterization, QSP, qDRIFT, and composite qDRIFT.

By Lemma~\ref{lemma:time_dep_to_ind_reduction}, it suffices to prove the gate lower bound for unit time evolution under the Hamiltonian $H = H_{2q}(\vec a)$ in Definition~\ref{def: time_ind_gate_rra}, for $\vec a$ a vector of coefficients $\pi/2 \ge a_1 \ge \cdots \ge a_L$. We let $U_H$ denote the corresponding unitary, with channel $\cU_H$. We consider approximating channels $\cE_H$ such that $\frac{1}{2} \| \cU_H(\rho) - \cE_H(\rho) \|_1 \le \eps$ for all density matrices $\rho$, where $\cE_H$ is of the form  $\cE_H = \sum_r p_r \cE_{H,r} = \mathbbm{E}_r [\cE_{H,r}]$, and each $\cE_{H,r}$ is a quantum channel with Stinespring dilation
\begin{align}
    \cE_{H,r}(\rho) = \tr_{\cA_r}(W_r (\rho \otimes \ketbra{0}_{\cA_r}) W_r^\dagger)~.
\end{align}
We assume each $W_r$ contains at most $G$ two-qubit gates on system and ancilla, with arbitrarily many one-qubit gates. 

We begin by exploring the consequences of the bounded gate complexity of the $W_r$. First, we make the simple observation that $W_r$ can only ``touch'' a bounded number of pairs $(A_j,B_j)$ with two-qubit gates. Specifically, let the \emph{touched set} $\tau_r \subseteq [L]$ consist of all $j \in [L]$ such that either $A_j$ or $B_j$ is acted on by at least one two-qubit gate in $W_r$. Since each two-qubit gate acts on a pair of qubits, and every qubit belongs to exactly one pair $(A_j, B_j)$, we must have
\begin{align}
    |\tau_r| \le 2 G.
\end{align}
Second, we observe that only a bounded number of pairs $(A_j, B_j)$ can be ``connected'' by the two-qubit gates in $W_r$. Specifically, let $\cG_r$ denote the graph with vertices corresponding to qubits (both system and ancilla) and edges corresponding to two-qubit gates in $W_r$. Let $\cC(\cG_r)$ denote the set of connected components of $\cG_r$. We define the \emph{connected set} $\sigma_r \subseteq [L]$ as the set of all $j$ such that $A_j$ and $B_j$ belong to the same connected component $C \in \cC(\cG_r)$. Suppose a particular $C$ contains $s$ pairs $(A_j, B_j)$; then, it contains at least $2s$ vertices and therefore at least $2s-1 \ge s$ edges. Summing over connected components, we must have that
\begin{align}
    |\sigma_r| \le G.
\end{align}
Third, we observe that the bounded two-qubit gate complexity of $W_r$ implies a particular factorization of the channel $\cE_{H,r}$. Note that $W_r$ factorizes over the connected components $\cC(\cG_r)$ of $\cG_r$; $W_r = \otimes_{C \in \cC(\cG_r)} W_{r,C}$. Moreover, since the ancilla qubits are initially in a product state and the partial trace also factorizes, the channel $\cE_{H,r}$ itself factorizes; $\cE_{H,r} = \otimes_{C \in \cC(\cG_r)} \cE_{H,r, C}$. 

We now briefly sketch how the preceding consequences of bounded two-qubit gate complexity lead to the desired lower bounds. As discussed in~\Cref{appendix: prelims}, we prove separate ``head'' and ``tail'' lower bounds of $G \ge \Omega(K_*)$ and $G \ge \Omega(\lambda_{K_*}^2/\eps)$, where $K_*$ minimizes the asymptotic cost $K + \lambda_K^2/\eps$ of composite qDRIFT. Both bounds utilize the factorization of the channels $\cE_{H,r}$, which is in tension with the correlations created by the ideal evolution $\cU_H$ across many pairs of qubits $(A_j, B_j)$. We detect this tension using complementary witness states for the head and tail lower bounds. For the head bound, we simply use the product state $\ket{+}^{\otimes 2L}$. The large head coefficients mean that the ideal evolution entangles many pairs $(A_j, B_j)$. By contrast, if $W_r$ uses too few two-qubit gates, then many such pairs remain disconnected (here we use the bounded size of the connected set), and one can find a bipartition across which many pairs $(A_j, B_j)$ remain in a product state. Its fidelity with the ideal, entangled state decreases exponentially in the number of separated pairs, yielding $G \ge \Omega(K_*)$.

For the tail, the coefficients are individually too weak to repeat the above argument. However, as discussed in the main text, by using cat state witnesses we can coherently add many tail coefficients in the accumulated relative phase. More precisely, we consider cat states which superpose the all-zero state with states that have ones in some subset of the tail coefficients, on either the $A$ qubits, the $B$ qubits, or both. For the ideal evolution, only the final state acquires a relative phase. However, for the approximate evolution, the former states may also acquire a relative phase, breaking the perfect correlation between pairs $(A_j, B_j)$. The number of possible violations increases as the size of the touched set decreases. We witness these violations by considering random cat states of the above form; combined with the bound on the size of the touched set we derive $G \ge \Omega(\lambda_{K_*}^2/\eps)$.

We now provide formal proofs of the two lower bounds, beginning with the tail bound.

\begin{lemma}[Gate complexity tail lower bound]~\label{lemma: gate_tail_lb} Let $\eps \in (0, \eps_0)$ for sufficiently small constant $\eps_0 > 0$.
Let $\cU_H$ denote time evolution for unit time under $H = \sum_{j=1}^L a_j \ketbra{11}_j$, and let $\cE_H = \sum_r p_r \cE_{H,r}$, where each $\cE_{H,r}$ has a Stinespring dilation with at most $G$ two-qubit gates. Suppose $\frac{1}{2} \| \cU_H(\rho) - \cE_H(\rho) \|_1 \le \eps$ for all density matrices $\rho$. Let $K_* \in \argmin_{0 \le K \le L}(K + \lambda_K^2 / \eps)$. Then 
\begin{align}
    G + 1 \ge C_t \frac{\lambda_{K_*}^2}{\eps}
\end{align}
for some constant $C_t > 0$.
\end{lemma}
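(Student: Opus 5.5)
The plan is to reduce to a phase-consistency contradiction on the untouched tail pairs, using cat-state witnesses. Write $T=\{j>K_*\}$ for the tail and $\lambda=\lambda_{K_*}$.

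First, if $\lambda^2/\eps\le C$ for a suitable constant $C$, the claim $G+1\ge C_t\lambda^2/\eps$ holds trivially with $C_t=1/C$. This is what the ``$+1$'' is for. So assume $\lambda^2\ge C\eps$, and suppose for contradiction that $G+1< c\,\lambda^2/\eps$ for a small constant $c$.

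By Lemma~\ref{lemma: optimal_k}, every tail coefficient satisfies $a_j\le a_{K_*+1}\le \eps/\lambda$. For each branch $r$ the touched set satisfies $|\tau_r|\le 2G$, so its tail mass is
\begin{align}
\sum_{j\in T\cap\tau_r}a_j\le 2G\eps/\lambda\le 2c\lambda .
\end{align}
Hence in every branch the untouched tail $U_r=T\setminus\tau_r$ carries mass at least $(1-2c)\lambda$. Since the fidelity with a pure target is linear in the output state, it suffices to show that each $\cE_{H,r}$ separately has fidelity at most $1-c'\lambda^2$ on some fixed witness state. Averaging over $r$ then gives error $\ge c'\lambda^2>\eps$ once $C$ is large, a contradiction.

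On an untouched pair $j\in U_r$, the channel acts as a tensor product of single-qubit channels on $A_j$ and on $B_j$, which may be tied to ancillas but are never tied to each other. Consider three witness cat states on a subset $S\subseteq T$:
\begin{align}
(\ket{0}+\ket{1_A^S})/\sqrt2,\qquad (\ket{0}+\ket{1_B^S})/\sqrt2,\qquad (\ket{0}+\ket{1_{AB}^S})/\sqrt2 ,
\end{align}
where $1_A^S$ means ones on the $A$ qubits of $S$, and similarly for $B$ and $AB$. Ideally the first two acquire no relative phase and the third acquires relative phase $\sum_{j\in S}a_j$. Write $\ket{x}$ for the nonzero branch of a cat. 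The output coherence $\langle 0|\rho|x\rangle$ of a product of single-qubit channels factorizes into a product of per-qubit coherence factors $z_j^A, z_j^B$ with $|z|\le1$. For the three cats these products are $\prod z^A$, $\prod z^B$ and $\prod z^Az^B$. Writing $\Phi_A,\Phi_B$ for the accumulated phases, the three fidelities require
\begin{align}
|\Phi_A|\lesssim\sqrt\eps,\qquad |\Phi_B|\lesssim\sqrt\eps,\qquad \Big|\Phi_A+\Phi_B-\sum_{j\in S}a_j\Big|\lesssim\sqrt\eps \pmod{2\pi}.
\end{align}
The fidelity is $(1+\mathrm{Re}(\cdot))/2$, and moduli below one only hurt. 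Because the true phase $\sum a_j\le\pi/2\cdot$(small) stays away from wraparound, these conditions are incompatible once $\sum_{j\in S}a_j\gtrsim\sqrt\eps$. Quantitatively, one of the three fidelities must be at most $1-c'(\sum_{j\in S}a_j)^2$. The witness is a fixed mixture (uniformly random choice) of the three cats, so the error scales with the untouched mass squared, i.e.\ $\lambda^2$.

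The main obstacle is that the witness must be chosen independently of $r$, while the cat on $T$ also includes the touched pairs $T\cap\tau_r$. Those pairs can be entangled with each other, with ancillas, and across pairs, so the coherence no longer factorizes and the touched part can contribute non-additive phases that might compensate. I would handle this as follows:
\begin{itemize}
\item Apply data processing at the level of the coherence. Factorize $\cE_{H,r}$ over connected components. The untouched pairs lie in components containing at most one of $A_j,B_j$, and each such component meets at most one untouched qubit unless gates join them, which would make them touched.
\item The off-diagonal element $\langle 0|\rho|x\rangle$ then factors as $\gamma_r(S\cap\tau_r)\cdot\prod_{j\in S\cap U_r}(\text{single-qubit factors})$. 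Here $\gamma_r$ is an arbitrary complex number of modulus at most $1$, but it is the same function for each of the three cat types restricted to touched pairs.
\item Choose $S$ uniformly at random, including each tail $j$ independently with probability $1/2$, and compare pairs of subsets that differ only on untouched pairs. Equivalently, include in the witness mixture cats on $S$ and on $S\triangle\{j\}$-type refinements. Over $S$, the untouched contribution varies additively while the touched factor $\gamma_r$ is shared, and the additivity contradiction survives.
\end{itemize}
If this sharing argument proves too delicate, the fallback is to absorb the touched pairs into the error budget. Their tail mass is at most $2c\lambda$, so they can shift the ideal phase by only $O(c\lambda)$, which is dominated by the $(1-2c)\lambda$ untouched mass for small $c$.
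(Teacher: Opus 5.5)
\textbf{There is a genuine gap at the per-branch step.} Your claim that the touched factor $\gamma_r$ is ``the same function for each of the three cat types'' is false. Write $W=S\cap\tau_r$. The $A$-, $B$- and $AB$-cats place the different strings $(\mathbbm{1}_W,0)$, $(0,\mathbbm{1}_W)$ and $(\mathbbm{1}_W,\mathbbm{1}_W)$ on the touched qubits. So the touched components contribute three a priori different factors, and $\gamma^{AB}\gamma^{0}/(\gamma^{A}\gamma^{B})$ is a function of $W$ chosen by the branch; it is forced to equal $1$ only when $W=\emptyset$.

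What is genuinely shared is the \emph{untouched} part: untouched qubits are singleton components carrying single-qubit unitaries. That is why the paper forms the ratio $g_r(z_S^{(00)})g_r(z_S^{(11)})/(g_r(z_S^{(01)})g_r(z_S^{(10)}))$, in which those factors cancel. Your fallback has the same flaw: $2c\lambda$ bounds the \emph{ideal} phase carried by touched pairs, not the phase the simulator may apply to them.

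A branch is therefore constrained only by how well it can predict $\sum_{j\in S}a_j$ from $W$, and for your witness this is easy. With $S$ uniform, \Cref{lemma: optimal_k} gives $\Var(\sum_{j\in S}a_j)=\frac14\sum_{j\in T}a_j^2\le\eps/4$, so the ideal phase is the known constant $\lambda/2$ up to $\sqrt\eps/2$. Consider a single unitary branch that picks $m=\lceil\log_2(1/\eps)\rceil$ tail pairs $\tau$ and applies $e^{-i(\lambda/2)\Pi}$, with $\Pi$ the projector onto ``some pair in $\tau$ is in $\ket{11}$''. It uses $O(\log 1/\eps)$ two-qubit gates, is exact on your $A$- and $B$-cats, and has average infidelity at most $\eps/16+2^{-m}$ on your $AB$-cats.

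Now take $a_j=1/L$ with $L\ge 2/\eps$; then $K_*=0$ and $\lambda=1$. Your witness is fooled with $O(\log 1/\eps)$ gates, although the lemma demands $\Omega(1/\eps)$. A fixed $S$ is even worse: a single controlled-phase gate fools it.

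In the same example, qDRIFT with $G$ steps has average branch infidelity $O(1/G)$ on every input. So \emph{no} witness can give per-branch error $\Omega(\min\{\lambda^2,1\})$ under $G<c\lambda^2/\eps$. What can actually be proved is only that the error is $\gtrsim\lambda^2/G$, i.e.\ a large multiple of $\eps$. Your no-wraparound remark is also unjustified: only $a_j\le\pi/2$ is assumed, and after \Cref{lemma:time_dep_to_ind_reduction} the tail mass can be of order $t$.

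\textbf{The missing idea is a two-point indistinguishability argument.} The paper draws $S$ from product distributions $\cP_\pm$ with inclusion probability $p_\pm=(1\pm\eta)/2$, where $\eta=1000\sqrt\eps/\lambda_{K_*}$. The mean ideal phases $p_\pm\lambda_{K_*}$ then differ by $\eta\lambda_{K_*}=1000\sqrt\eps$. This is far beyond the $O(\sqrt\eps)$ fluctuations controlled by \Cref{lemma: optimal_k}, and since only this difference matters, wraparound is harmless.

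Using the second-moment bound $\mathbb{E}_r|f-g_r|^2\le 8\eps$ (needed because the ratio is nonlinear in $g_r$), the paper shows that the ratio concentrates at the respective mean phase under both $\cP_+$ and $\cP_-$. The ratio depends only on $(r,S\cap\tau_r)$, and Pinsker plus the KL chain rule put the laws of $(r,S\cap\tau_r)$ under $\cP_\pm$ within total variation $\sqrt{3\eta^2G}$. Together these force $G\gtrsim1/\eta^2\sim\lambda_{K_*}^2/\eps$. The cost comes from having to touch $\gtrsim1/\eta^2$ tail pairs to learn the bias, not from an $\Omega(\lambda^2)$ error in each branch.
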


\begin{proof}
    For $x, y \in \{0,1\}^{L}$, let $|x,y\rangle = \ket{x_1}_{A_1} \otimes \cdots \otimes \ket{x_L}_{A_L} \otimes \ket{y_1}_{B_1} \otimes \cdots \otimes \ket{y_L}_{B_L}$. Let $h(x,y) = \sum_{j=1}^L a_j x_j y_j$ and $f(x,y) = \exp(-i h(x,y))$, so that $U_H \ket{x,y} = f(x,y) \ket{x,y}$. Similarly, for each branch $r$ of the approximating channel, let $g_r(x,y) = \bra{x,y} \cE_{H,r} ( \ketbra{x,y}{0,0} ) \ket{0,0}$; note that $|g_r(x,y)| \le 1$ and that $g_r(x,y)$ also factorizes over connected components of $\cG_r$; $g_r(x,y) = \prod_{C \in \cC(\cG_r)} g_{r,C}(x_C,y_C)$, where $x_C$ and $y_C$ denote $x$ and $y$ restricted to $C$, respectively. Finally, let $g(x,y) = \sum_r p_r g_r(x,y)$. We claim that $f(x,y)$ and $g(x,y)$ must be close for every $x$ and $y$, since $\cE_H$ gives a good approximation to $\cU_H$. To see this, for each $(x,y) \neq (0,0)$ and $c \in \{0,1,2,3\}$, define $\ket{\psi_c(x,y)} = (\ket{0,0} + i^c \ket{x,y} ) / \sqrt{2}$ and $\psi_c(x,y) = \ketbra{\psi_c(x,y)}$. Observe that $\ketbra{x,y}{0,0} = \frac{1}{2} \sum_{c=0}^3 i^{-c} \psi_c(x,y)$; hence
\begin{align}
    |f(x,y) - g(x,y)| &= | \tr(\ketbra{0,0}{x,y} \left( \cU_H(\ketbra{x,y}{0,0})-\cE_H(\ketbra{x,y}{0,0}) \right) ) |
    \\
    &\le \| \cU_H(\ketbra{x,y}{0,0}) - \cE_H(\ketbra{x,y}{0,0}) \|_1
    \\
    &\le 4 \eps~, \label{eq:off_diag_exp_bound}
\end{align}
where the first inequality is H\"{o}lder's and the second is triangle inequality plus $\| \cU_H(\psi_c(x,y))-\cE_H(\psi_c(x,y))  \|_1 \le 2 \eps$ for each $c$. It follows similarly that $|f(0,0) - g(0,0) | \le 2\eps < 4 \eps$ by considering the state $\ketbra{0,0}$ directly.

We have shown $|f(x,y) - g(x,y)| = |\mathbbm{E}_r[f(x,y) - g_r(x,y)]| \le 4 \eps$; we claim also that
\begin{align}\label{eq: off_diag_second_moment_bound}
    \mathbbm{E}_r[|f(x,y) - g_r(x,y)|^2] \le 8 \eps.
\end{align}
To see this, observe that 
\begin{align}
    \mathbbm{E}_r[|f(x,y) - g_r(x,y)|^2] &= \mathbbm{E}_r[|f(x,y)|^2 + |g_r(x,y)|^2 - 2 \mathrm{Re}[f^*(x,y) g_r(x,y)]]\label{eq:adfirst}
    \\
    &\le 2 \mathbbm{E}_r[ 1 - \mathrm{Re}[f^*(x,y) g_r(x,y)]]
    \\
    &\le 2 |1 - \mathbbm{E}_r[f^*(x,y) g_r(x,y)]|
    \\
    &= 2 |f(x,y) - \mathbbm{E}_r[g_r(x,y)]|
    \\
    &\le 8 \eps.\label{eq:adlast}
\end{align}
Here, the first inequality holds since $|f(x,y)| , |g_r(x,y)| \le 1$, the second by upper bounding real part by absolute value, and the third by \cref{eq:off_diag_exp_bound}.

    Let $T = \{K_* +1, \dots, L\}$ denote the tail set. For each subset $S \subseteq T$, let $\mathbbm{1}_S$ denote the $L$-bit string with ones at the locations in $S$ and zeroes elsewhere. We define four $2L$-bit strings: $z_S^{(00)} = (0,0)$, $z_S^{(01)} = (0, \mathbbm{1}_S)$, $z_S^{(10)} = (\mathbbm{1}_S, 0)$, and $z_S^{(11)} = (\mathbbm{1}_S, \mathbbm{1}_S)$. We have
    \begin{align}
        f(z_S^{(00)}) = f(z_S^{(01)}) = f(z_S^{(10)}) = 1 \quad \text{and} \quad f(z_S^{(11)}) = e^{-i \sum_{j \in S} a_j}.
    \end{align}
    For each branch $r$ and set $S \subseteq T$, we define 
    \begin{align}
        \widetilde{R}_r(S) = \frac{g_r(z_S^{(00)})g_r(z_S^{(11)})}{g_r(z_S^{(01)})g_r(z_S^{(10)})}
    \end{align}
    when the denominator $g_r(z_S^{(01)})g_r(z_S^{(10)}) \neq 0$; else let $\widetilde{R}_r(S) = 0$. Finally, define $R_r(S)$ as $\widetilde{R}_r(S)$ projected onto the unit disk, so that $|R_r(S)| \le 1$. We call $R_r(S)$ the \emph{subset correlation ratio} of $S$ on branch $r$. The key property of $R_r(S)$ is that (when the denominator is nonzero), we have
    \begin{align}\label{eq: subset_corr_condition}
        R_r(S) = R_r(S \cap \tau_r)
    \end{align}
    for $\tau_r$ the touched set of branch $r$. To see why, note that if $j \notin \tau_r$, then both $A_j$ and $B_j$ belong to singleton connected components of $\cG_r$, and the factorization formula for $g_r$ implies that the contribution to $R_r(S)$ from $j$ cancels in numerator and denominator. 
    
    We claim that, for \emph{any} distribution over subsets $S \subseteq T$, the subset correlation ratio $R_r(S \cap \tau_r)$ actually \emph{concentrates} to $e^{-i \sum_{j \in S} a_j}$ (note that this is the value of the subset correlation ratio if $g_r$ is replaced by $f$). Quantitatively, we claim that
    \begin{align}\label{eq: corr_ratio_conc}
        \mathbbm{E}_{r,S}[| R_r(S \cap \tau_r) - e^{-i \sum_{j \in S} a_j}|] \le 99 \sqrt{\eps}.
    \end{align}
    To prove the claim, first let $\bm{A}$ denote the event that $|g_r(z_S^{(01)})-1|$ and $|g_r(z_S^{(10)})-1|$ are both at most $\frac{1}{2}$. We have
    \begin{align}
        \mathbbm{E}_{r,S}[| R_r(S \cap \tau_r) - e^{-i \sum_{j \in S} a_j}|] \le \mathbbm{E}_{r,S}[\lvert R_r(S \cap \tau_r) - e^{-i \sum_{j \in S} a_j}\rvert \mid \bm{A}] \mathbbm{P}[\bm{A}] + 2 \mathbbm{P}[\bm{A}^c] 
    \end{align}
    using the trivial bound $| R_r(S \cap \tau_r) - e^{-i \sum_{j \in S} a_j}| \le 2$. For the second term, by union bound and Markov's inequality, we have 
    \begin{align}\label{eq: ratio_conc_second_term}
        2\mathbbm{P}[\bm{A}^c] \le 8 \mathbbm{E}_{r,S}[|g_r(z_S^{(01)})-1|^2 + |g_r(z_S^{(10)})-1|^2].
    \end{align}
    For the first term, conditional on $\bm{A}$ we have $|g_r(z_S^{(01)})g_r(z_S^{(10)})| \ge \frac{1}{4}$, so in particular the denominator of $\widetilde{R}_r(S)$ is nonzero, and we may use~\cref{eq: subset_corr_condition} to conclude
    \begin{align}
        \mathbbm{E}_{r,S}[\lvert R_r(S \cap \tau_r) - e^{-i \sum_{j \in S} a_j}\rvert \mid \bm{A}] &= \mathbbm{E}_{r,S}[\lvert R_r(S) - e^{-i \sum_{j \in S} a_j}\rvert \mid \bm{A}]
        \\
        &\le \mathbbm{E}_{r,S}[\lvert \widetilde{R}_r(S) - e^{-i \sum_{j \in S} a_j}\rvert \mid \bm{A}]~,
    \end{align}
    where the inequality follows since $R_r(S)$ is the projection of $\widetilde{R}_r(S)$ onto the unit disk. Now, by repeated application of triangle inequality along with $|g_r| \le 1$, we have conditional on $\bm{A}$ that $\lvert \widetilde{R}_r(S) - e^{-i \sum_{j \in S} a_j}\rvert  \le 4(| g_r(z_S^{(11)}) -  e^{-i \sum_{j \in S} a_j}| + | g_r(z_S^{(00)}) - 1| + | g_r(z_S^{(01)}) - 1| + | g_r(z_S^{(10)}) - 1|)$, where the factor of 4 comes from the minimum modulus of the denominator of $\widetilde{R}_r(S)$. 
    Applying Cauchy-Schwarz and Jensen's inequality, we have
    \begin{align}\label{eq: ratio_conc_first_term}
        \mathbbm{E}_{r,S}[\lvert R_r(S \cap \tau_r) - e^{-i \sum_{j \in S} a_j}\rvert \mid \bm{A}] \mathbbm{P}[\bm{A}] \le 8 \Big(  \mathbbm{E}_{r,S}[ | g_r(z_S^{(11)}) -  e^{-i \sum_{j \in S} a_j}|^2 &+ | g_r(z_S^{(00)}) - 1|^2 
        \\
        &+ | g_r(z_S^{(01)}) - 1|^2 + | g_r(z_S^{(10)}) - 1|^2] \Big)^{1/2} \nonumber
    \end{align}
    Combining eqs.~\eqref{eq: ratio_conc_second_term} and~\eqref{eq: ratio_conc_first_term} and repeatedly applying~\cref{eq: off_diag_second_moment_bound}, we conclude that
    \begin{align}
        \mathbbm{E}_{r,S}[| R_r(S \cap \tau_r) - e^{-i \sum_{j \in S} a_j}|] \le 8 \sqrt{32 \eps} + 128 \eps \le 99 \sqrt{\eps}~,
    \end{align}
    as desired, where the final inequality holds for $\eps \le \frac{1}{10}$.

    Eq.~\eqref{eq: corr_ratio_conc} is true for any distribution over $S \subseteq T$. To conclude the proof, we will choose two distributions $\cP_+$ and $\cP_-$ over $S \subseteq T$ such that (i) the values to which $\textrm{exp}(-i \sum_{j \in S} a_j)$ concentrates under the two distributions are sufficiently far apart but (ii) $R_r(S \cap \tau_r)$ remains indistinguishable under the two distributions, unless the number of two-qubit gates $G$ is sufficiently large. Together with~\cref{eq: corr_ratio_conc}, this proves the gate lower bound. We define $\cP_{\pm}$ as the distribution over $S \subseteq T$ such that every $j \in T$ is included independently with biased probability $p_{\pm} = (1 \pm \eta)/2$, for $\eta = 1000 \sqrt{\eps} /\lambda_{K_*} = 1000 \sqrt{\eps} / (\sum_{j \in T} a_j)$. (We may assume $\lambda_{K_*} \neq 0$, else the tail lower bound is trivial. We may also assume $\eta \le 1/2$ by assuming $\lambda_{K_*}/\sqrt{\eps} \ge 2000$ and choosing the constant $C_t$ to be at most $2.5 \times 10^{-7}$ to handle the other case.) First, we observe that
    \begin{align}\label{eq: conc_value_bound}
        \big| e^{-i p_+ \sum_{j \in T} a_j} - e^{-i p_- \sum_{j \in T} a_j} \big| = 2 \hspace{.05cm}\mathrm{sin}\Big(\frac{\eta}{2} \sum_{j \in T} a_j\Big) \ge 500 \sqrt{\eps}
    \end{align}
    where the final inequality holds for $\eps \le \eps _0 = 10^{-5}$. On the other hand, we have
    \begin{align}
        \mathbbm{E}_{r, S \sim \cP_{\pm}}[ |R_r(S \cap \tau_r) - e^{-i p_{\pm} \sum_{j \in T} a_j} | ] \le 
        \mathbbm{E}_{r, S \sim \cP_{\pm}}[ |R_r(S \cap \tau_r) - &e^{-i \sum_{j \in S} a_j} | ] 
        \\
        &+ \mathbbm{E}_{S \sim \cP_{\pm}}[ | e^{-i \sum_{j \in S} a_j} - e^{-i p_{\pm} \sum_{j \in T} a_j} | ] \nonumber 
    \end{align}
    The first term is at most $99 \sqrt{\eps}$ by~\cref{eq: corr_ratio_conc}; the second term is bounded as
    \begin{align}
        \mathbbm{E}_{S \sim \cP_{\pm}}[ | e^{-i \sum_{j \in S} a_j} - e^{-i p_{\pm} \sum_{j \in T} a_j} | ] &\le \mathbbm{E}_{S \sim \cP_{\pm}}\Big[ \Big| \sum_{j \in S} a_j - p_{\pm} \sum_{j \in T} a_j \Big| \Big]
        \\
        &= \mathbbm{E}_{S \sim \cP_{\pm}}\Big[ \Big| \sum_{j \in S} a_j -  \mathbbm{E}_{S' \sim \cP_{\pm}}\Big[ \sum_{j \in S'} a_j \Big] \Big| \Big]
        \\
        &\le \Big( \Var_{S \sim \cP_{\pm}}\Big[ \sum_{j \in S} a_j \Big] \Big)^{1/2}
        \\
        &= \Big( p_{\pm} (1 - p_{\pm}) \sum_{j \in T} a_j^2\Big)^{1/2}
        \\
        &\le \sqrt{\eps}/2.
    \end{align}
    Here, the first inequality is $|e^{ix} - e^{iy}| \le |x-y|$, the second is Jensen's, and the third follows from~\Cref{lemma: optimal_k} and $p_{\pm} (1 - p_{\pm}) \le 1/4$. Combining this with~\cref{eq: corr_ratio_conc}, we conclude that
    \begin{align}\label{eq: better_conc}
        \mathbbm{E}_{r, S \sim \cP_{\pm}}[ |R_r(S \cap \tau_r) - e^{-i p_{\pm} \sum_{j \in T} a_j} | ] \le 100 \sqrt{\eps}.
    \end{align}
    
    Now, finally, we compare the total variation distance between the distributions of $R_r(S \cap \tau_r)$ under $(r, S) \sim (p_r, \cP_{\pm})$; call these two distributions $p_{R,\pm}$. By data processing, the total variation distance between $p_{R,\pm}$ is at most the total variation distance between the distributions of $(r, S \cap \tau_r)$ under $(r, S) \sim (p_r, \cP_{\pm})$; call these two distributions $p_{\cap, \pm}$. By Pinsker's inequality, we may relate the total variation distance $\mathrm{TV}(p_{\cap, +},p_{\cap, -})$ to the KL divergence $D_{\mathrm{KL}}(p_{\cap, +} \| p_{\cap, -})$ as follows:
    \begin{align}
        \mathrm{TV}(p_{\cap, +},p_{\cap, -}) \le \Big( \frac{1}{2} D_{\mathrm{KL}}(p_{\cap, +} \| p_{\cap, -}) \Big)^{1/2}.
    \end{align}
    By the chain rule of KL divergence, $D_{\mathrm{KL}}(p_{\cap, +} \| p_{\cap, -})$ is the conditional KL divergence between the conditional distributions of $S \cap \tau_r$ given $r$. The conditional distribution is just a product of $|\tau_r|$ Bernoullis with probability $p_{\pm}$, so using $D_{\mathrm{KL}}(\bern(p_+) \|  \bern(p_-)) = 2\eta \tanh^{-1}(\eta)$
    
    we have
    \begin{align}
         \mathrm{TV}(p_{\cap, +},p_{\cap, -}) \le \Big( \frac{1}{2} \sum_r p_r | \tau_r| 2 \eta \tanh^{-1}(\eta)\Big)^{1/2} \le \Big( \frac{3 \eta^2}{2} \sum_r p_r | \tau_r|\Big)^{1/2} ~,
    \end{align}
    where the second inequality holds since $\eta \le 1/2$. Finally, $|\tau_r| \le 2G$, so we have
    \begin{align}\label{eq: TV_bound}
        \mathrm{TV}(p_{R, +},p_{R, -}) \le \mathrm{TV}(p_{\cap, +},p_{\cap, -}) \le (3 \eta^2 G)^{1/2}.
    \end{align}
    We show that this leads to a lower bound on the number of two-qubit gates $G$. Consider any $y$ in the support of $p_{R, \pm}$; we have
    \begin{align}
         \big| e^{-i p_+ \sum_{j \in T} a_j} - e^{-i p_- \sum_{j \in T} a_j} \big| \le  \big| e^{-i p_+ \sum_{j \in T} a_j} - y \big| +  \big| y - e^{-i p_- \sum_{j \in T} a_j} \big|.
    \end{align}
    Multiply both sides by $\min(p_{R, +}(y), p_{R, -}(y))$ and sum over $y$; by the overlap characterization of total variation distance $\sum_y \min(p_{R, +}(y), p_{R, -}(y)) = 1 - \mathrm{TV}(p_{R, +},p_{R, -})$, we have
    \begin{align}
        (1 - \mathrm{TV}(p_{R, +},p_{R, -}))\big| e^{-i p_+ \sum_{j \in T} a_j} - e^{-i p_- \sum_{j \in T} a_j} \big| \le \mathbbm{E}_{r, S \sim \cP_+}\big[\big| R_r(S &\cap \tau_r) - e^{-i p_+ \sum_{j \in T} a_j} \big|\big] \label{eq:tvovp}
        \\
        &+ \mathbbm{E}_{r, S \sim \cP_-}\big[\big| R_r(S \cap \tau_r) - e^{-i p_- \sum_{j \in T} a_j} \big|\big] \nonumber
    \end{align}
    By~\cref{eq: conc_value_bound} and~\cref{eq: TV_bound}, the left-hand side is at least $500 (1 - (3 \eta^2 G)^{1/2})\sqrt{\eps}$, and by~\eqref{eq: better_conc} the right-hand side is at most $200 \sqrt{\eps}$; it follows after rearranging and substituting $\eta = 1000 \sqrt{\eps} / \lambda_{K_*}$ that
    \begin{align}
        G \ge C_t \frac{\lambda_{K_*}^2}{\eps}
    \end{align}
    for $C_t = 10^{-7}$. This proves the tail lower bound.
\end{proof}

Next, we prove the head lower bound.

\begin{lemma}[Gate complexity head lower bound]~\label{lemma: gate_head_lb} Let $\eps \in (0, \eps_0)$ for sufficiently small constant $\eps_0 > 0$.
Let $\cU_H$ denote time evolution for unit time under $H = \sum_{j=1}^L a_j \ketbra{11}_j$, and let $\cE_H = \sum_r p_r \cE_{H,r}$, where each $\cE_{H,r}$ has a Stinespring dilation with at most $G$ two-qubit gates. Suppose $\frac{1}{2} \| \cU_H(\rho) - \cE_H(\rho) \|_1 \le \eps$ for all density matrices $\rho$. Let $K_* \in \argmin_{0 \le K \le L}(K + \lambda_K^2 / \eps)$. Then 
\begin{align}
    G + C_{h,1} \ge C_{h,2} K_* - C_{h,3} \frac{\lambda_{K_*}^2}{\eps}
\end{align}
for some constants $C_{h,1}, C_{h,2}, C_{h,3} > 0$.
\end{lemma}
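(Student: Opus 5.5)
The plan is to use the product witness state $\ket{+}^{\otimes 2L}$ and the connected set $\sigma_r$ from the discussion preceding \Cref{lemma: gate_tail_lb}.

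\emph{Ideal output.} Under $\cU_H$ each pair $(A_j,B_j)$ evolves independently to
\[
\ket{\phi_j}=\tfrac12\left(\ket{00}+\ket{01}+\ket{10}+e^{-ia_j}\ket{11}\right),
\]
so the ideal output is $\ket{\Phi}=\otimes_j\ket{\phi_j}$. A direct $2\times 2$ computation should show that the largest Schmidt coefficient squared of $\ket{\phi_j}$ across $A_j|B_j$ satisfies $\mu_j\le 1-c\,a_j^2$ for some constant $c>0$. This uses $0<a_j\le\pi/2$, so that $|1-e^{-ia_j}|\ge c' a_j$.

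\emph{Approximate output, one branch at a time.} Since $W_r=\otimes_C W_{r,C}$ and the ancillas start in a product state, the branch output is $\rho_r=\cE_{H,r}(\ketbra{+}^{\otimes 2L})=\otimes_{C\in\cC(\cG_r)}\rho_{r,C}$. For $j\notin\sigma_r$, the qubits $A_j$ and $B_j$ lie in different components. Color each component independently and uniformly with one of two colors. This defines a bipartition $X|Y$ of the system qubits across which $\rho_r$ is a product state. Each pair $j\notin\sigma_r$ is split by the cut with probability $1/2$, so some coloring splits at least half of the head pairs $j\in[K_*]\setminus\sigma_r$. Since $|\sigma_r|\le G$, this is at least $(K_*-G)/2$ head pairs.

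For a fixed bipartition, $\ket{\Phi}$ is a tensor product of the split pair states and a remainder that is itself product across $X|Y$. Its maximal overlap with any state that is product across $X|Y$ is therefore at most the product of the largest Schmidt coefficients squared of the split pairs; this follows because Schmidt coefficients multiply under tensor products. Since $\rho_r$ is a mixture of such product states, and using $a_j\ge a_{K_*}$ for head indices,
\[
\bra{\Phi}\rho_r\ket{\Phi}\le\prod_{j\ \mathrm{split}}(1-c a_j^2)\le \exp\!\left(-\tfrac{c}{2}a_{K_*}^2(K_*-G)\right).
\]

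\emph{Combining the branches.} Trace distance at most $\eps$ against the pure state $\ket\Phi$ gives $\bra\Phi\cE_H(\ketbra{+}^{\otimes 2L})\ket\Phi\ge 1-\eps$. Averaging the branch bound over $r$ then yields $1-\eps\le\exp(-\tfrac{c}{2}a_{K_*}^2(K_*-G))$. For small $\eps$ this gives $K_*-G\le C\eps/a_{K_*}^2$.

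\emph{Converting to the stated form.} It remains to express $\eps/a_{K_*}^2$ through \Cref{lemma: optimal_k}, which gives $\eps\le a_{K_*}^2+2a_{K_*}\lambda_{K_*}$ when $K_*>0$; the case $K_*=0$ is trivial. If $a_{K_*}^2\ge\eps/2$, then $K_*-G\le 2C$, which is absorbed into $C_{h,1}$. Otherwise $a_{K_*}\lambda_{K_*}\ge\eps/4$, so $a_{K_*}^2\ge\eps^2/(16\lambda_{K_*}^2)$, and hence $K_*-G\le 16C\lambda_{K_*}^2/\eps$, which is the $C_{h,3}$ term. In both cases we obtain $G+C_{h,1}\ge C_{h,2}K_*-C_{h,3}\lambda_{K_*}^2/\eps$.

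The step I expect to be the main obstacle is the branch-fidelity bound. It requires checking carefully that $\rho_r$ is exactly product across components even after the ancillas are traced out, and that the overlap of $\ket\Phi$ with $X|Y$-product states is bounded by the product of the split pairs' $\mu_j$ even when components mix several pairs and ancillas. A secondary check is the explicit constant $c$ in $\mu_j\le 1-c a_j^2$, which depends on $a_j\le\pi/2$.
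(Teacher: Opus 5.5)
Your proposal is correct and follows essentially the paper's proof: the same $\ket{+}^{\otimes 2L}$ witness, a random bipartition of the connected components of $\cG_r$, the per-pair Schmidt bound (the exact value is $\cos^2(a_j/4)$), and the same use of \Cref{lemma: optimal_k} to convert $\eps/a_{K_*}^2$. The only difference is cosmetic. Because $|\sigma_r|\le G$ holds for every branch, you obtain a per-branch bound that is uniform in $r$, whereas the paper averages over $d_r$ and applies Markov's inequality. Your route gives coefficient $1$ on $K_*$ instead of $1-2\eps$.
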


\begin{proof}
    Let $\ket{\psi} = \ket{+}^{\otimes 2L}$ and $\psi = \ketbra{\psi}$. Let $\psi_{\cU_H} = \cU_H(\psi)$ and $\psi_{\cE_{H,r}} = \cE_{H,r}(\psi)$. By assumption, we have $\frac{1}{2} \| \psi_{\cU_H} - \sum_r p_r \psi_{\cE_{H,r}} \|_1 \le \eps$. By Fuchs van de Graaf, this implies that
    \begin{align}\label{eq: plus_state_bound}
        \mathbbm{E}_r[ \bra{\psi_{\cU_H}} \psi_{\cE_{H,r}} \ket{\psi_{\cU_H}}] \ge 1-\eps.
    \end{align}
    We must upper bound each $\bra{\psi_{\cU_H}} \psi_{\cE_{H,r}} \ket{\psi_{\cU_H}}$. Recall that $\sigma_r \subseteq [L]$ denotes the set of $j$ such that $A_j$ and $B_j$ belong to the same connected component of $\cG_r$. We claim that, for each $r$,
    \begin{align}
        \bra{\psi_{\cU_H}} \psi_{\cE_{H,r}} \ket{\psi_{\cU_H}} \le e^{-\frac{1}{32} \sum_{j \notin \sigma_r} a_j^2}~.
    \end{align}
    To see this, choose a random bipartition of the connected components $\cC(\cG_r)$ of $\cG_r$. For each $j \notin \sigma_r$, $A_j$ and $B_j$ are \emph{separated} by the bipartition with probability $1/2$; hence, there exists a bipartition $\cB \subset \cC(\cG_r)$ with
    \begin{align}\label{eq: random_bipartition}
        \sum_{j:~A_j,B_j \text{ separated by } \cB} a_j^2 \ge \frac{1}{2} \sum_{j \notin \sigma_r} a_j^2~.
    \end{align}
    $\cE_{H,r}$ factorizes over connected components; hence $\psi_{\cE_{H,r}}$ is a product state across the bipartition $\cB$. Therefore $\bra{\psi_{\cU_H}} \psi_{\cE_{H,r}} \ket{\psi_{\cU_H}}$ is at most the largest squared Schmidt coefficient of $\ket{\psi_{\cU_H}}$ across $\cB$. To bound this largest squared Schmidt coefficient, first observe that $\ket{\psi_{\cU_H}}$ factorizes as
    \begin{align}
        \ket{\psi_{\cU_H}} = \bigotimes_{j = 1}^L e^{-i a_j \ketbra{11}_j } \ket{+}_{A_j} \ket{+}_{B_j}.
    \end{align}
    The tensor product of all factors such that $A_j$ and $B_j$ lie in the same half of the bipartition is a product state across $\cB$. Therefore, the Schmidt coefficients of $\ket{\psi_{\cU_H}} $ are simply the products of all Schmidt coefficients of the individual factors $e^{-i a_j \ketbra{11}_j } \ket{+}_{A_j} \ket{+}_{B_j}$ such that $A_j$ and $B_j$ are separated by $\cB$. To compute the maximum squared Schmidt coefficient, it therefore suffices to focus on a single one of these. By an explicit computation, the maximum squared Schmidt coefficient of any such factor is $\cos^2(a_j/4)$. Therefore, we have
    \begin{align}
        \bra{\psi_{\cU_H}} \psi_{\cE_{H,r}} \ket{\psi_{\cU_H}} \le \prod_{j:~A_j,B_j \text{ separated by } \cB} \cos^2(\frac{a_j}{4}).
    \end{align}
    Using $\cos(x) \le e^{-x^2/2}$ for $|x| \le \pi/2$ as well as~\cref{eq: random_bipartition}, we have
    \begin{align}
        \bra{\psi_{\cU_H}} \psi_{\cE_{H,r}} \ket{\psi_{\cU_H}} \le e^{-\frac{1}{32} \sum_{j \notin \sigma_r} a_j^2},
    \end{align}
    as desired. From~\cref{eq: plus_state_bound}, we have
    \begin{align}
        \mathbbm{E}_r[ 1 - e^{-\frac{1}{32} \sum_{j \notin \sigma_r} a_j^2}] \le \eps.
    \end{align}
    Let $d_r = |\{ j \le K_* : j \notin \sigma_r\}|$ be the number of disconnected pairs $A_j,B_j$ in the ``head'' $j \le K_*$.  Since $a_j \ge a_{K_*}$ for all $j \le K_*$, we have
    \begin{align}
        \mathbbm{E}_r[ 1 - e^{-\frac{1}{32} d_r a_{K_*}^2}] \le \eps.
    \end{align}
    Applying $1 - e^{-x} \ge \frac{1}{2} \min(1,x)$, we conclude that $\mathbbm{E}_r[ \min(1, \frac{1}{32} d_r a_{K_*}^2)] \le 2 \eps$. Next, for each $d_r \le K_*$, we have $d_r \le \frac{32}{a_{K_*}^2} \min(1, \frac{1}{32} d_r a_{K_*}^2) + K_* \mathbbm{1}_{a_{K_*}^2 d_r / 32 > 1}$; taking the expected value of both sides and applying Markov's inequality, we conclude that
    \begin{align}
        \mathbbm{E}_r[d_r] \le \frac{64 \eps}{a_{K_*}^2} + 2 \eps K_*.
    \end{align}
    Now, $d_r = K_* - | \sigma_r \cap [K_*] |$, so we have $\mathbbm{E}_r[| \sigma_r \cap [K_*] |] \ge (1 - 2 \eps) K_* - 64 \eps / a_{K_*}^2$. Since $| \sigma_r \cap [K_*] | \le |\sigma_r| \le G$, this implies that
    \begin{align}
        G \ge (1 - 2 \eps) K_* - \frac{64 \eps}{a_{K_*}^2} \ge \frac{K_*}{2} - \frac{64 \eps}{a_{K_*}^2},
    \end{align}
    for $\eps \le 1/4$. We claim that $\eps / a_{K_*}^2 \le 2 + 16 \lambda_{K_*}^2 / \eps$. There are two cases to consider. If $\eps / a_{K_*}^2 \le 2$, we are done.  Else, $\eps / a_{K_*}^2 > 2$, and by Lemma~\ref{lemma: optimal_k} we have $2 a_{K_*} \lambda_{K_*} \ge \eps - a_{K_*}^2 > \eps / 2$ (we may assume $K_* > 0$, else the head lower bound trivially holds). Hence, $\eps / a_{K_*}^2 \le \eps (4 \lambda_{K_*}/\eps)^2 = 16 \lambda_{K_*}^2 / \eps$, and we are done. It follows that
    \begin{align}
         G + C_{h,1} \ge C_{h,2} K_* - C_{h,3} \frac{\lambda_{K_*}^2}{\eps}
    \end{align}
    for $C_{h,1} = 128$, $C_{h,2} = 1/2$, and $C_{h,3} = 1024$. This proves the head lower bound.
\end{proof}

Combining Lemmas~\ref{lemma: gate_tail_lb} and~\ref{lemma: gate_head_lb} (as well as the reduction Lemma~\ref{lemma:time_dep_to_ind_reduction} from time-dependent to time-independent lower bounds) proves Theorem~\ref{thm:gate}.

\section{Query lower bound}\label{sec:query}

In this section, we prove~\Cref{thm:query}, demonstrating a classical query lower bound for Hamiltonian simulation. The query lower bound follows largely the same proof strategy as the gate lower bound, but it is somewhat simpler since it suffices to consider a 1-local Hamiltonian family. As before, we first prove a unit-time lower bound, which extends to asymptotically late times via \Cref{lemma:time_dep_to_ind_reduction}. Recall that our hard Hamiltonians, defined in~\Cref{def: time_ind_query_rra}, are of the form 
\begin{align}
	H_s = \sum_{j=1}^L s_j a_j \ketbra{1}_j, \qquad s \in \{\pm 1\}^L
\end{align}
and that we have query access to $H$ via a classical oracle $O_H$ that returns $s_j$ and $\ketbra{1}_j$ on input $j$ (\Cref{def: oracle}). Since the operators in the Hamiltonian are fixed, and to make the $s$-dependence explicit, we will refer to the oracle below as $O_s$, which takes $j$ and returns $s_j$.
As before, we will use $r$ to denote the transcript of the algorithm, and $\tau_r\subseteq [L]$ to denote the set of distinct queried indices in the transcript $r$. Conditional on $r$, the algorithm is a channel that depends only on the transcript (i.e., the signs that were queried). 

We proceed by showing unit-time tail and head lower bounds analogously to the gate lower bounds. The intuition for the proof technique is similar; at a technical level, the tail bounds are proven by choosing a parametrized bias (either towards $+1$ or $-1$) and then sampling the signs independently from a biased Bernoulli distribution. For the head bounds, it suffices to sample the signs uniformly at random. Explicitly, in \Cref{lemma: query_tail_lb} and \Cref{lemma: query_head_lb} we will define integer $K_* \in \argmin_{0 \le K \le L}(K + \lambda_K^2 / \eps)$ and the distribution $\cP = (\cP_+ + \cP_-)/2$ in terms of product distributions
\begin{align}\label{eq:prs}
    \mathbbm{P}_{\cP_\pm}[s_j = +1] = \begin{cases}1/2 & j \leq K_* \\ (1\pm \eta)/2 > 0 & j > K_* \end{cases} \qquad \text{for} \qquad \eta = \min\left\{\frac{16\sqrt \epsilon}{\lambda_{K_*}}, \frac{1}{2}\right\},
\end{align}
where we use the convention $\eta=1/2$ when $\lambda_{K_*}=0$.

\begin{lemma}[Query complexity tail lower bound]\label{lemma: query_tail_lb}
Let $\epsilon \in (0, \epsilon_0)$ for sufficiently small constant $\epsilon_0 > 0$. Let $\cU_s$ denote time evolution for unit time under $H_s = \sum_{j=1}^L s_j a_j \ketbra{1}_j$. For $K_* \in \argmin_{0 \le K \le L}(K + \lambda_K^2 / \eps)$ and $\cP$ defined in \cref{eq:prs}, suppose an algorithm with classical oracle access to $O_s$ outputs a channel $\cE_s$ satisfying $\frac{1}{2}\EE_{s\sim\cP}\norm{\cU_s(\rho) - \cE_s(\rho)}_1 \leq \epsilon$ for all density matrices $\rho$. Then the algorithm must make $Q$ queries to $O_s$ for $Q$ satisfying
\begin{align}
	Q + 1 \geq c_t \frac{\lambda_{K_*}^2}{\epsilon}
\end{align}
for some constant $c_t > 0$.
\end{lemma}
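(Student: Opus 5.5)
The plan mirrors the proof of Lemma~\ref{lemma: gate_tail_lb}. The random signs play the role of the random subset $S$, and the queried set $\tau_r$ plays the role of the touched set. I will use cat-state witnesses on the tail $T=\{K_*+1,\dots,L\}$. For a string $x\in\{0,1\}^L$ supported on $T$, set $f_s(x)=\bra{x}\cU_s(\ketbra{x}{0})\ket{0}=e^{-i\sum_{j\in T}s_ja_jx_j}$. For a transcript $r$, set $g_r(x)=\bra{x}\cE_r(\ketbra{x}{0})\ket{0}$, where $\cE_r$ is the channel output given transcript $r$.

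\textbf{Closeness of $f_s$ and $g_r$.} Decomposing $\ketbra{x}{0}$ into the four cat states $\psi_c$ exactly as in the proof of Lemma~\ref{lemma: gate_tail_lb}, the average-case guarantee gives
\begin{align}
\EE_{s}\big|f_s(x)-\EE_{r\mid s}g_r(x)\big|\le 4\eps .
\end{align}
The argument of \cref{eq:adfirst}--\cref{eq:adlast} then yields the second-moment bound
\begin{align}
\EE_{s,r}|f_s(x)-g_r(x)|^2\le 8\eps .
\end{align}
Here I use the full all-ones tail string $x=\mathbbm{1}_T$. This is simpler than the gate case because no correlation ratio is needed: the output channel given $r$ depends only on the transcript, so $g_r$ is a function of $r$ alone.

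\textbf{Two hypotheses that should not be distinguishable.} Take $s\sim\cP_\pm$ from \cref{eq:prs}, and assume $\eta=16\sqrt\eps/\lambda_{K_*}\le 1/2$; otherwise the bound is trivial after shrinking $c_t$. On the tail, $\EE[s_j]=\pm\eta$ and $\Var(s_j)\le 1$. With $\phi(s)=\sum_{j\in T}s_ja_j$, the means are $\pm\eta\lambda_{K_*}=\pm16\sqrt\eps$. The standard deviation is at most $(\sum_{j>K_*}a_j^2)^{1/2}\le\sqrt\eps$ by Lemma~\ref{lemma: optimal_k}. Hence $e^{-i\phi(s)}$ concentrates, in $L^1$ to within $\sqrt\eps$, at $e^{\mp 16 i\sqrt\eps}$, and these two values are separated by about $32\sqrt\eps$. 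Combining this with the second-moment bound via Jensen, $g_r(\mathbbm{1}_T)$ lies within $O(\sqrt\eps)$ of $e^{\mp16i\sqrt\eps}$ in expectation under $\cP_\pm$. The constants should be tuned (possibly by enlarging the 16) so that the separation beats the sum of these error terms.

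\textbf{Information bound on the transcript.} Next I bound the total variation distance between the transcript distributions under $\cP_+$ and $\cP_-$. By the chain rule for adaptive queries, the KL divergence is at most the expected number of distinct tail queries times $D_{\KL}(\bern(p_+)\|\bern(p_-))\le 3\eta^2$. Head queries contribute nothing, since the head marginals coincide. This gives $\mathrm{TV}\le(3\eta^2Q)^{1/2}$. I will then use the overlap argument \cref{eq:tvovp} with $y=g_r(\mathbbm{1}_T)$. This gives
\begin{align}
(1-\sqrt{3\eta^2Q})\cdot\Omega(\sqrt\eps)\le O(\sqrt\eps),
\end{align}
which forces $Q\gtrsim 1/\eta^2=\Omega(\lambda_{K_*}^2/\eps)$.

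\textbf{Main obstacle.} The hard step is the average-over-$s$ guarantee. It only controls $\EE_s|f_s-\EE_{r|s}g_r|$, so I must check that the convexity step still works. Expanding $|f_s-g_r|^2\le 2(1-\mathrm{Re}\,f_s^*g_r)$ and averaging over $r$ given $s$, then over $s$, does work. A further subtlety is that the concentration separation must dominate all the $O(\sqrt\eps)$ errors, which is why $\eta$ is chosen with a large constant and $\eps_0$ is taken small.
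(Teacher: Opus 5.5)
Your proposal is correct and takes essentially the same route as the paper: the same $\ket{0^L}$/$\ket{1_{\rm tail}}$ off-diagonal witness, the per-$s$ second-moment bound averaged over $\cP$, concentration of $e^{-i\sum_{j\in T}s_ja_j}$ at $e^{\mp 16i\sqrt{\eps}}$ via \Cref{lemma: optimal_k}, the chain-rule/Pinsker bound on the transcript distributions, and the overlap argument. One caveat: the $16$ is fixed by \cref{eq:prs}, since it defines $\cP$ in the hypothesis, so you cannot enlarge it — but you do not need to, because even your cruder per-hypothesis error of $5\sqrt{\eps}$ (obtained from $\EE_{\cP_\pm}\le 2\,\EE_{\cP}$ for nonnegative quantities) totals $10\sqrt{\eps}$, which is below the separation $2\sin(16\sqrt{\eps})\ge 26\sqrt{\eps}$ for small $\eps$.
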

\begin{proof}
	If $\lambda_{K_*}=0$, the claimed tail bound is trivial, so assume $\lambda_{K_*}>0$. If $\lambda_{K_*}\leq 32\sqrt{\epsilon}$, then $Q+1 \geq \frac{1}{1024}\frac{\lambda_{K_*}^2}{\epsilon}$, so the result holds with constant $c_t=1/1536$. Hence, we can assume for the remainder of the proof that $\lambda_{K_*}>32\sqrt{\epsilon}$ and $\eta=\frac{16\sqrt{\epsilon}}{\lambda_{K_*}}$.
    
	For state $| 1_{\rm tail} \rangle = | 0^{K_*}1^{L-K_*} \rangle$, we consider ideal time evolution
	\begin{align}\label{eq:idealpsi}
		e^{-iH_s} \lr{\frac{| 0^L \rangle  + | 1_{\rm tail} \rangle}{\sqrt 2}} = \frac{| 0^L \rangle + \mathrm{exp}\big(-i\sum_{j=K_*+1}^L s_j a_j\big) | 1_{\rm tail} \rangle}{\sqrt 2}.
	\end{align}
	If too few signs are queried, a dephasing error occurs. To control this dephasing error, it suffices --- as in the gate lower bound --- to isolate the off-diagonal in the $\{ | 0^L \rangle, | 1_{\rm tail} \rangle \}$ subspace, since \cref{eq:idealpsi} lies in this subspace, and to compare it with the ideal time evolution
	\begin{align}\label{eq:idealphase}
		\cU_s\lr{\ketbra{1_{\rm tail}}{0^L}} = \mathrm{exp}\Big(-i\sum_{j=K_*+1}^L s_j a_j\Big)\ketbra{1_{\rm tail}}{0^L}.
	\end{align}
	When $s$ is drawn according to $\cP_\pm$ and $r$ is the resulting transcript, let $\mu_\pm$ be the distribution of the random variable $\bra{1_{\rm tail}}\cE_r\lr{\ketbra{1_{\rm tail}}{0^L}}\ket{0^L}$, and let $\cR_\pm$ be the transcript distributions. To show a query lower bound, we place upper and lower bounds on the total variation distance $\mathrm{TV}(\mu_+ , \mu_-)$; the upper bound will depend on $Q$, while the lower bound will be independent. By the data processing inequality and Pinsker's inequality,
	\begin{align}
		\mathrm{TV}(\mu_+ , \mu_-) \leq \mathrm{TV}(\cR_+ , \cR_-) \leq \left( \frac{1}{2} D_\KL(\cR_+ \| \cR_-) \right)^{1/2}.
	\end{align}
    As in the gate lower bound, by chain rule, each query to a fresh tail index increases the KL divergence by $2\eta \tanh^{-1}(\eta)$, 
	while head queries and repeated tail queries contribute 0. We use the bound $2 \eta \tanh^{-1}(\eta) \leq 3 \eta^2$ for $0\leq \eta \leq 1/2$ to obtain
	\begin{align}\label{eq:tvub}
		\mathrm{TV}(\mu_+ , \mu_-) \leq \eta\sqrt{\frac{3Q}{2}}.
	\end{align}
	We now wish to lower bound this by a $Q$-independent quantity. We lower bound the total variation distance by measuring how far it is from ideal evolution \cref{eq:idealphase} under the expected phase. That is, for expected phases
	\begin{align}
		\phi_\pm = \mathbbm{E}_{\cP_\pm}\Big[\sum_{j=K_*+1}^L s_j a_j \Big] = \pm \eta\lambda_{K_*},
	\end{align}
	we have similarly to \cref{eq:tvovp} that
	\begin{align}
		\mathrm{TV}(\mu_+ , \mu_-) \geq 1 - \frac{\EE_{Y\sim\mu_+} \big[\abs{Y - e^{-i\phi_+}}\big] + \EE_{Y\sim\mu_-} \big[\abs{Y - e^{-i\phi_-}}\big]}{\abs{e^{-i\phi_+} - e^{-i\phi_-}}}.
	\end{align}
	The denominator is readily bounded by
	\begin{align}
		\abs{e^{-i\phi_+} - e^{-i\phi_-}} = \abs{e^{-i\eta\lambda_{K_*}} - e^{i\eta\lambda_{K_*}}} = 2\sin(16\sqrt\epsilon) \geq 16\sqrt\epsilon,
	\end{align}
	for sufficiently small $\epsilon < \epsilon_0$. We claim that the numerator is upper-bounded by
	\begin{align}\label{eq:numbound}
		\EE_{Y\sim\mu_+} \big[\abs{Y - e^{-i\phi_+}}\big] + \EE_{Y\sim\mu_-} \big[\abs{Y - e^{-i\phi_-}}\big] \leq 8 \sqrt\epsilon,
	\end{align}
	giving $\mathrm{TV}(\mu_+ , \mu_-) \ge 1/2$, and hence by \cref{eq:tvub} that
	\begin{align}
		Q \geq \frac{1}{6\eta^2} = \frac{1}{1536}\frac{\lambda_{K_*}^2}{\epsilon}.
	\end{align}
	It remains to show \cref{eq:numbound}. We first apply the triangle inequality
	\begin{align}\label{eq:close1}
		\EE_{Y\sim\mu_\pm} \big[ \abs{Y - e^{-i\phi_\pm}}\big] &\leq \EE_{s\sim \cP_\pm, r|s} \big[ \big| \bra{1_{\rm tail}}\cE_r\lr{\ketbra{1_{\rm tail}}{0^L}}\ket{0^L} - \mathrm{exp}\Big(-i\sum_{j=K_*+1}^L s_j a_j\Big) \big|\big]\notag\\
		&\qquad + \EE_{s \sim \cP_\pm} \big[\big| \mathrm{exp}\Big(-i\sum_{j=K_*+1}^L s_j a_j\Big) - \exp(-i\phi_\pm) \big|\big].
	\end{align}
	The second term is easy to bound: using $|e^{-ix} - e^{-iy}| \leq |x-y|$, Jensen's inequality gives
	\begin{align}\label{eq:phasep}
		\EE_{s\sim \cP_\pm} \,\big[\big| \mathrm{exp}\Big(-i\sum_{j=K_*+1}^L s_j a_j\Big) - \exp(-i\phi_\pm) \big|\big] &\leq \EE_{s\sim \cP_\pm}\,\big[\big| \sum_{j=K_*+1}^L s_j a_j - \phi_\pm \big|\big] \leq \Big( \mathrm{Var}_{\cP_\pm} \Big[ \sum_{j=K_*+1}^L s_j a_j \Big]\Big)^{1/2},
	\end{align}
	where the variance is bounded by
	\begin{align}\label{eq:varp}
		\mathrm{Var}_{\cP_\pm} \Big[ \sum_{j=K_*+1}^L s_j a_j \Big] = \lr{1 - \eta^2} \sum_{j=K_*+1}^L a_j^2 \leq \sum_{j=K_*+1}^L a_j^2 \leq \epsilon
	\end{align}
	by \Cref{lemma: optimal_k}. Combining \cref{eq:varp} with \cref{eq:close1} gives
	\begin{align}\label{eq:close2}
		\EE_{Y\sim\mu_\pm} \big[ \big| Y - e^{-i\phi_\pm} \big| \big] \leq \EE_{s\sim \cP_\pm, r|s} \big[\big| \bra{1_{\rm tail}}\cE_r\lr{\ketbra{1_{\rm tail}}{0^L}}\ket{0^L} - \mathrm{exp}\Big(-i\sum_{j=K_*+1}^L s_j a_j\Big) \big|\big] + \sqrt\epsilon.
	\end{align}
	We now bound the remaining expectation.
    Similarly to the gate bound, we write the off-diagonal matrix element $\ketbra{1_{\rm tail}}{0^L}$ as $\frac{1}{2} \sum_{c=0}^3 i^{-c} \ketbra{\psi_c}$ for $\ket{\psi_c} = (| 0^L \rangle + i^{c} | 1_{\rm tail} \rangle)/\sqrt{2}$ to conclude by \cref{eq:idealphase} that
	\begin{align}\label{eq:tlb}
		\Big| \EE_{r\mid s}\big[\bra{1_{\rm tail}}\cE_r\lr{\ketbra{1_{\rm tail}}{0^L}}\ket{0^L} - \mathrm{exp}\Big(-i\sum_{j=K_*+1}^L s_j a_j \Big) \big] \Big| &\leq \norm{\cE_s\lr{\ketbra{1_{\rm tail}}{0^L}} - \cU_s\lr{\ketbra{1_{\rm tail}}{0^L}}}_1\notag \\
        &\leq \frac{1}{2}\sum_{c=0}^3 \norm{\cU_s(\ketbra{\psi_c}) - \cE_s(\ketbra{\psi_c})}_1,
	\end{align}
    where we used $\cE_s=\EE_{r\mid s}[\cE_r]$.
	Since $|\bra{1_{\rm tail}}\cE_r\lr{\ketbra{1_{\rm tail}}{0^L}}\ket{0^L}|\leq 1$ and $|\mathrm{exp}\big(-i\sum_{j=K_*+1}^L s_j a_j\big)|=1$, the same second-moment argument as \cref{eq:adfirst}--\cref{eq:adlast} gives
	\begin{align}
		\EE_{r\mid s}\big[\big| \bra{1_{\rm tail}}\cE_r\lr{\ketbra{1_{\rm tail}}{0^L}}\ket{0^L}-\mathrm{exp}\Big(-i\sum_{j=K_*+1}^L s_j a_j\Big)\big|^2\big] &\leq 2\big|\mathrm{exp}\Big(-i\sum_{j=K_*+1}^L s_j a_j\Big)-\EE_{r\mid s}\bra{1_{\rm tail}}\cE_r\lr{\ketbra{1_{\rm tail}}{0^L}}\ket{0^L}\big| \notag \\
        &\leq \sum_{c=0}^3 \norm{\cU_s(\ketbra{\psi_c}) - \cE_s(\ketbra{\psi_c})}_1.
	\end{align}
    Since $\cP=(\cP_++\cP_-)/2$, we have by the assumption $\frac{1}{2}\EE_{s\sim\cP}\norm{\cU_s(\rho) - \cE_s(\rho)}_1 \leq \epsilon$ that
    \begin{align}
        \sum_{\nu\in\{\pm\}} \EE_{s\sim\cP_\nu,r\mid s}\big[\big| \bra{1_{\rm tail}}\cE_r\lr{\ketbra{1_{\rm tail}}{0^L}}\ket{0^L}&-\mathrm{exp}\Big(-i\sum_{j=K_*+1}^L s_j a_j\Big)\big|^2\big] \notag\\
        &\leq \sum_{\nu\in\{\pm\}} \sum_{c=0}^3 \EE_{s\sim\cP_\nu} \norm{\cU_s(\ketbra{\psi_c}) - \cE_s(\ketbra{\psi_c})}_1 \leq 16\epsilon.
    \end{align}
	Plugging this into \cref{eq:close2} with Cauchy--Schwarz gives
	\begin{align}
		&\sum_{\nu\in\{\pm\}}\EE_{Y\sim\mu_\nu} \big[\big| Y - e^{-i\phi_\nu}\big|\big] \leq \big(2\sum_{\nu\in\{\pm\}} \EE_{s\sim\cP_\nu,r\mid s}\big[\big| \bra{1_{\rm tail}}\cE_r\lr{\ketbra{1_{\rm tail}}{0^L}}\ket{0^L}-\mathrm{exp}\Big(-i\sum_{j=K_*+1}^L s_j a_j\Big)\big|^2\big]\big)^{1/2} + 2\sqrt \epsilon \notag\\
        &\leq (\sqrt{32}+2)\sqrt{\epsilon} < 8\sqrt{\epsilon}.
	\end{align}
	This proves \cref{eq:numbound} and completes the proof with $c_t = 1/1536$.
\end{proof}

Next, we prove the head lower bound.

\begin{lemma}[Query complexity head lower bound]\label{lemma: query_head_lb}
Let $\epsilon \in (0, \epsilon_0)$ for sufficiently small constant $\epsilon_0 > 0$. Let $\cU_s$ denote time evolution for unit time under $H_s = \sum_{j=1}^L s_j a_j \ketbra{1}_j$. Define $\cP$ and $K_*$ as in \Cref{lemma: query_tail_lb}. Suppose an algorithm with classical oracle access to $O_s$ outputs a channel $\cE_s$ satisfying $\frac{1}{2}\EE_{s\sim\cP}\norm{\cU_s(\rho) - \cE_s(\rho)}_1 \leq \epsilon$ for all density matrices $\rho$. Then the algorithm must make $Q$ queries to $O_s$ for $Q$ satisfying
\begin{align}
    Q + c_{h,1} \ge c_{h,2} K_* - c_{h,3} \frac{\lambda_{K_*}^2}{\epsilon}
\end{align}
for constants $c_{h,1}, c_{h,2}, c_{h,3} > 0$.
\end{lemma}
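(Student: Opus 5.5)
We follow the gate head bound (\Cref{lemma: gate_head_lb}), with the entanglement witness replaced by a sign-guessing witness. For the head $j\le K_*$ the signs are uniform under $\cP$, and the head signs are independent of the tail signs. We work with the product input $\ket{\psi}=\ket{+}^{\otimes L}$. The ideal output is
\begin{align}
\ket{\psi_s}=\bigotimes_j \frac{\ket{0}+e^{-is_ja_j}\ket{1}}{\sqrt2}.
\end{align}
Fuchs--van de Graaf and the accuracy assumption give
\begin{align}
\EE_{s\sim\cP}\,\EE_{r\mid s}\bigl[\bra{\psi_s}\cE_r(\psi)\ket{\psi_s}\bigr]\ge 1-\eps,
\end{align}
where $\cE_r$ is the channel conditioned on transcript $r$ (this uses that fidelity with a pure state is linear in the mixed output).

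Fix a transcript $r$ with queried set $\tau_r$, $|\tau_r|\le Q$. Conditioned on $r$, the unqueried head signs $s_j$, $j\in[K_*]\setminus\tau_r$, remain i.i.d.\ uniform, because queries to other indices reveal nothing about them under the product distribution. The key step is to bound the average fidelity of a single state $\rho=\cE_r(\psi)$ against a uniform mixture over these unqueried head signs:
\begin{align}
\EE_{s_{U}}\bigl[\bra{\psi_s}\rho\ket{\psi_s}\bigr]\le \Bigl\|\EE_{s_U}\psi_s\Bigr\|_\infty,
\end{align}
where $U=[K_*]\setminus\tau_r$ and the remaining signs are held fixed. The averaged state $\EE_{s_U}\psi_s$ is a tensor product, so its top eigenvalue factorizes. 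For a single qubit, $\tfrac12(\psi_{+}+\psi_{-})$ has Bloch vector of length $|\cos a_j|$ in the equatorial plane, so its largest eigenvalue is
\begin{align}
\frac{1+\cos a_j}{2}=\cos^2(a_j/2)\le e^{-a_j^2/4},
\end{align}
using $a_j\le\pi/2$. Hence the fidelity on branch $r$ is at most $\exp\bigl(-\tfrac14\sum_{j\in[K_*]\setminus\tau_r}a_j^2\bigr)\le e^{-d_r a_{K_*}^2/4}$, with $d_r=|[K_*]\setminus\tau_r|\ge K_*-Q$.

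The bookkeeping then follows the gate head proof verbatim, with $\tau_r$ in place of $\sigma_r$. The two bounds give $\EE_r[1-e^{-d_ra_{K_*}^2/4}]\le\eps$, and from $1-e^{-x}\ge\tfrac12\min(1,x)$ together with Markov we get
\begin{align}
\EE_r[d_r]\le \frac{16\eps}{a_{K_*}^2}+2\eps K_*.
\end{align}
Since $|\tau_r\cap[K_*]|\le Q$, this yields $Q\ge(1-2\eps)K_*-16\eps/a_{K_*}^2$. Finally, \Cref{lemma: optimal_k} gives $\eps/a_{K_*}^2\le 2+16\lambda_{K_*}^2/\eps$, by the same two-case argument as before, which produces the stated constants. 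The case $K_*=0$ is trivial.

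The main obstacle is justifying rigorously that conditioning on an adaptive, randomized transcript leaves the unqueried head signs uniform and independent of $\cE_r$. This needs care because, under $\cP=(\cP_++\cP_-)/2$, the tail signs are correlated through the choice of $\pm$, while the head signs are independent of that choice. The plan is to write the joint law as $s_{\rm head}\otimes(\text{mixture over tail})$ and note that the probability of a transcript factorizes as a product of indicator consistency terms over queried indices. The posterior over unqueried head coordinates is therefore uniform, and the channel $\cE_r$ depends only on $r$.
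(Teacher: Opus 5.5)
Your proposal is correct and follows essentially the paper's argument. The shared steps are: Hölder against the average of the ideal output over the unqueried (uniform) head signs, the single-qubit top eigenvalue $\cos^2(a_j/2)$, Markov's inequality, and \Cref{lemma: optimal_k}. The differences are minor. You feed in $\ket{+}^{\otimes L}$ rather than the paper's $\ket{+^{K_*}0^{L-K_*}}$, which forces you to hold the non-head signs fixed while averaging; you handle this correctly, whereas the paper's choice makes the ideal state independent of the tail signs. Your $16\eps/a_{K_*}^2$ is a loose but valid version of the paper's $8\eps/a_{K_*}^2$.
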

\begin{proof}
If $K_*=0$, the claim is trivial. Otherwise, draw $s\sim\cP$.
Under $\cP$, the signs $s_1,\ldots,s_{K_*}$ are independent and
uniform, and they are independent of the tail signs and of the hidden
choice between $\cP_+$ and $\cP_-$. Ideal time evolution satisfies
\begin{align}
	\ket{\psi_s} = e^{-iH_s} | +^{K_*}0^{L-K_*} \rangle = \bigotimes_{j=1}^{K_*} \frac{\ket{0}+e^{-is_ja_j}\ket{1}}{\sqrt 2} \otimes |0^{L-K_*}\rangle.
\end{align}
Using $\cE_s = \Esub{r\mid s}{\cE_r}$ for each fixed $s$, our assumption $\frac{1}{2}\EE_{s\sim\cP}\norm{\cU_s(\rho) - \cE_s(\rho)}_1 \leq \epsilon$ implies 
\begin{align}
	\EE_{s,r\mid s}[ \bra{\psi_s} \cE_r\lr{\ketbra{+^{K_*}0^{L-K_*}}}\ket{\psi_s}] \geq 1 - \epsilon,
\end{align}
as can be seen by measuring the POVM $\{\ketbra{\psi_s}, I-\ketbra{\psi_s}\}$ for each fixed $s$ and then averaging over $r\mid s$. Conditioned on a transcript $r$, the remaining head signs are uniformly random; the expectation over these signs satisfies
\begin{align}\label{eq:qeps}
	\EE_{s\mid r} [\bra{\psi_s} \cE_r\lr{\ketbra{+^{K_*}0^{L-K_*}}}\ket{\psi_s} ] \leq \| \EE_{s\mid r}[\ketbra{\psi_s}] \|,
\end{align}
by H\"{o}lder's inequality. For a single unqueried index $j$, the expected state is
\begin{align}
	\frac{1}{2}\sum_{s_j = \pm 1} \lr{\frac{\ket{0} + e^{-is_ja_j}\ket{1}}{\sqrt 2}} \lr{\frac{\bra{0} + e^{is_ja_j}\bra{1}}{\sqrt 2}}
\end{align}
with eigenvalues $\cos^2(a_j/2)$ and $\sin^2(a_j/2)$. Since $a_j \leq \pi/2$ by assumption, this combines with \cref{eq:qeps} to give
\begin{align}
	\epsilon \geq \mathbbm{E}_{r} \Big[ 1 - \prod_{j \in [K_*] \setminus \tau_r} \cos^2 \frac{a_j}{2}\Big] \geq \mathbbm{E}_{r} \Big[1 - \mathrm{exp}\Big(-\frac{1}{4}\sum_{j \in [K_*] \setminus \tau_r} a_j^2\Big)\Big] \geq \frac{1}{2} \mathbbm{E}_{r} \Big[\mathrm{min} \Big(1, \frac{1}{4}\sum_{j \in [K_*] \setminus \tau_r} a_j^2\Big)\Big],
\end{align}
where we first use $\cos^2(x/2) \leq e^{-x^2/4}$ for all $x \leq \pi/2$, then $1-e^{-x} \geq \min(x,1)/2$ for $x \geq 0$. Since $a_j \geq a_{K_*}$, we have
\begin{align}
	\frac{1}{2} \EE_r \Big[ \min\Big(1, \frac{a_{K_*}^2}{4} \left|[K_*] \setminus \tau_r\right|\Big) \Big] \leq \epsilon.
\end{align}
By Markov's inequality, the probability that the min is 1 is at most $2\epsilon$; consequently,
\begin{align}
	\EE_r[\left|[K_*] \setminus \tau_r\right|] \leq 2\epsilon K_* + \frac{8\epsilon}{a_{K_*}^2}
\end{align}
giving query bound
\begin{align}
	Q \geq \EE_r[\left|[K_*] \cap \tau_r\right|] \geq (1-2\epsilon)K_* - \frac{8\epsilon}{a_{K_*}^2} \geq \frac{K_*}{2} - \frac{8\epsilon}{a_{K_*}^2},
\end{align}
where in the last inequality we take $\epsilon \leq 1/4$. Finally, we use $\epsilon \leq a_{K_*}^2 + 2a_{K_*}\lambda_{K_*}$ from \Cref{lemma: optimal_k}. If $\epsilon/a_{K_*}^2\leq 2$, then this term is already bounded by a constant. Otherwise $a_{K_*}^2<\epsilon/2$, and hence $2a_{K_*}\lambda_{K_*}\geq \epsilon-a_{K_*}^2>\epsilon/2$, which implies
\begin{align}
    \frac{\epsilon}{a_{K_*}^2} \leq 16\frac{\lambda_{K_*}^2}{\epsilon}.
\end{align}
This proves
\begin{align}
	Q + 16 \geq \frac{1}{2}K_* - 128\frac{\lambda_{K_*}^2}{\epsilon},
\end{align}
i.e., the claimed bound with $c_{h,1}=16$, $c_{h,2}=1/2$, and $c_{h,3}=128$.
\end{proof}
Combining Lemmas~\ref{lemma: query_tail_lb} and~\ref{lemma: query_head_lb} (as well as the reduction Lemma~\ref{lemma:time_dep_to_ind_reduction} from time-dependent to time-independent lower bounds) proves Theorem~\ref{thm:query}.

\end{document}